\newcommand{\biggg}{\bBigg@{3}}
\newcommand{\Biggg}{\bBigg@{3.5}}
\def\Bigggl{\mathopen\Biggg}
\newcommand{\bigggg}{\bBigg@{5}}
\newcommand{\Bigggg}{\bBigg@{6.5}}
\newtheorem{theorem}{Theorem}
\newtheorem{lemma}{Lemma}
\newtheorem{remark}{Remark}
\newtheorem{assumption}{Assumption}
\newtheorem{problem}{Problem}
\begin{document}

\title{\hspace*{-2.8mm} \LARGE \bf Distributed Optimal Output Consensus of Uncertain Nonlinear Multi-Agent Systems over Unbalanced Directed Networks via Output Feedback}

\author{Jin~Zhang,
        Lu~Liu,~\IEEEmembership{Senior Member, IEEE,}
        Xinghu~Wang,
        and~Haibo~Ji
\thanks{J. Zhang is with the Department of Automation, University of Science and Technology of China, Hefei 230027, China, and also with the Department of Biomedical Engineering, City University of Hong Kong, Hong Kong (e-mail: zj55555@mail.ustc.edu.cn).
	
	L. Liu is with the Department of Biomedical Engineering, City University of Hong Kong, Hong Kong (e-mail: luliu45@cityu.edu.hk).
	
	X. Wang and H. Ji are with the Department of Automation, University of Science and Technology of China, Hefei 230027, China (e-mail: xinghuw@ustc.edu.cn; jihb@ustc.edu.cn).}}


\maketitle

\begin{abstract}
	In this note, a novel observer-based output feedback control approach is proposed to address the distributed optimal output consensus problem of uncertain nonlinear multi-agent systems in the normal form over unbalanced directed graphs. The main challenges of the concerned problem lie in unbalanced directed graphs and nonlinearities of multi-agent systems with their agent states not available for feedback control. Based on a two-layer controller structure, a distributed optimal coordinator is first designed to convert the considered problem into a reference-tracking problem. Then a decentralized output feedback controller is developed to stabilize the resulting augmented system. A high-gain observer is exploited in controller design to estimate the agent states in the presence of uncertainties and disturbances so that the proposed controller relies only on agent outputs. The semi-global convergence of the agent outputs toward the optimal solution that minimizes the sum of all local cost functions is proved under standard assumptions. A key feature of the obtained results is that the nonlinear agents under consideration are only required to be locally Lipschitz and possess globally asymptotically stable and locally exponentially stable zero dynamics.
\end{abstract}

\begin{IEEEkeywords}
	Distributed optimization, nonlinear systems, output feedback, high-gain observer, directed networks.
\end{IEEEkeywords}

\IEEEpeerreviewmaketitle

\section{Introduction}
\IEEEPARstart{I}{n} the past decade, distributed optimization has made remarkable advancement and is widely applied in a broad range of fields, such as machine learning, power systems and so on \cite{Yang2019survey,Molzahn2017survey}. Distributed optimization aims at achieving an optimal consensus, which minimizes the sum of all local cost functions attached to individual agents, in a distributed manner. A considerable volume of works on solving distributed optimization problems have been reported, in both discrete-time \cite{Nedic2009distributed,Nedic2014distributed,Xi2017dextra} and continuous-time settings \cite{Gharesifard2013distributed,Kia2015distributed}.
Many engineering tasks can be formulated as a distributed optimal output consensus (DOOC) problem of multi-agent systems with general agent dynamics, such as economic dispatch in power systems \cite{stegink2016Unifying} and source seeking in multi-robot systems \cite{zhang2011extremum}. So far, much effort has been devoted to the DOOC problem of multi-agent systems with double integrators agent dynamics \cite{zhang2017distributed} and high-order linear agent dynamics \cite{li2019distributed,zhang2020exponential}. 

More recently, DOOC problems of more general multi-agent systems with nonlinear agent dynamics have been investigated \cite{Wang2015distributed,Tang2018distributed,zhang2021optimal,li2020distributed,tang2020optimal}. The DOOC problem of multi-agent systems with nonlinear agent dynamics in output feedback form over undirected networks is addressed in \cite{Wang2015distributed}. Later, the authors in \cite{Tang2018distributed} develop an adaptive controller to tackle the difficulty brought by unknown nonlinear agent dynamics though still over undirected networks. It is worth noting that the controller developed in \cite{Tang2018distributed} is based on a two-layer framework, which consists of an optimal coordinator generating the optimal solution and a decentralized output feedback controller driving each agent to track its individual optimal coordinator. Then, in our preliminary work \cite{zhang2021optimal}, the two-layer framework is extended to solve the DOOC problem of disturbed second-order nonlinear systems but over unbalanced directed networks. The DOOC problem of more general nonlinear multi-agent systems in the normal form over undirected or balanced directed networks are addressed in \cite{tang2020optimal,li2020distributed}, where the inverse dynamics of agents are unfortunately required to be input-to-state stable (ISS). Moreover, the controllers developed in \cite{tang2020optimal,li2020distributed} are based on agent state feedback, which may not be always available for many practical multi-agent systems.

It is of both theoretical and practical significance to study unbalanced directed graphs as the information exchange between agents may be unidirectional due to limited bandwidth or other constraints. Inspired by the graph balancing technique in \cite{mei2015distributed}, a distributed continuous-time control strategy is proposed in \cite{Li2017distributed}, where the left eigenvector corresponding to the zero eigenvalue of the Laplacian matrix is utilized to tackle weight-unbalanced directed networks. However, this technique cannot be adopted when the left eigenvector is not known\textit{ a priori}. To remove the explicit dependency on the left eigenvector corresponding to the zero eigenvalue of the Laplacian matrix, the authors in \cite{Zhu2018continuous} propose a novel distributed algorithm with its gradient term being divided by an auxiliary variable. Nevertheless, one common limitation of the above-mentioned works is that only single integrator agent dynamics are considered.

Motivated by the above observation, this note investigates the DOOC problem of uncertain nonlinear multi-agent systems with nonlinear agent dynamics in the normal form over unbalanced directed networks via output feedback. To address main challenges brought by unbalanced directed graphs and uncertain nonlinear dynamics, the concerned DOOC problem is first converted to a reference-tracking problem by designing a distributed optimal coordinator, and the resulting augmented system is then stabilized by an observer-based output feedback controller. The main contributions of this work are summarized as follows.

1) In contrast to linear agent dynamics or nonlinear agent dynamics in output feedback form considered in \cite{zhang2020exponential,Wang2015distributed,Tang2018distributed,Zhu2018continuous,Xie2020suboptimal,Li2017distributed,zhang2021optimal}, the nonlinear agent dynamics in the normal form are more general and include the above-mentioned agent dynamics as special cases \cite{khalil2002nonlinear}. Furthermore, in virtue of a high-gain observer, only agent outputs are needed for controller design in this note, which greatly enhances its applicability in practice.

2) Unlike the works \cite{Gharesifard2013distributed,Kia2015distributed,Wang2015distributed,Tang2018distributed,tang2020optimal,li2020distributed} that study undirected or balanced directed graphs, this work considers more general and also more challenging weight-unbalanced directed networks. It is shown that the proposed controller is able to deal with the imbalance resulting from general directed networks and drive the agent outputs to the optimal solution.

3) Compared with the existing works \cite{tang2020optimal,li2020distributed}, where the inverse dynamics of agents are assumed to be ISS, this work only requires the zero dynamics of agents to be globally asymptotically stable and locally exponentially stable. It is shown by semi-global stability analysis that the convergence can be achieved via a linear controller instead of a nonlinear one even under this less stringent assumption. It is worth pointing out that a linear controller is advantageous in both theoretical design and practical implementation, and thus has significant engineering implications.

The rest of this note is organized as follows. Some preliminaries and the problem formulation are given in Section \ref{section preliminaries}. The main results of this note and a simulation example are provided in Section \ref{section main results} and Section \ref{section simulation results}, respectively. The conclusion and future works are stated in Section \ref{section conclusion}.

\textit{Notations}: $\|\cdot\|$ denotes the Euclidean norm of vectors or induced 2-norm of matrices. $x^{\mathrm{T}}$ and $A^{\mathrm{T}}$ refer to the transpose of vector $ x $ and matrix $ A $, respectively. $ \operatorname{col}\left(x_{1}, x_{2}, \ldots, x_{n}\right) $ represents a column vector with $x_{1}, x_{2}, \ldots, x_{n}$ being its elements. $\operatorname{diag}\left(x_{1}, x_{2}, \ldots, x_{n}\right)$ represents a diagonal matrix with $x_{1}, x_{2}, \ldots, x_{n}$ being its diagonal elements. For a differentiable function $f: \mathbb{R}^{n} \rightarrow \mathbb{R}$, $\nabla f$ denotes its gradient.  $\bar{Q}_{R}^{n} \triangleq\{x=\operatorname{col}(x_{1}, \ldots, x_{n}) \in \mathbb{R}^{n}:|x_{i}| \leq R, i=1, \ldots, n\}$ and $\bar{\varOmega}_{c}(V(x)) \triangleq \{x \in \mathbb{R}^{n}: V(x) \leq c\}$ are compact sets, where $V: \mathbb{R}^{n} \rightarrow \mathbb{R}$ is a positive definite function.

\section{Preliminaries and Problem Formulation}\label{section preliminaries}
In this section, we present some preliminaries on graph theory and convex analysis, and then formulate the problem under consideration. 

\subsection{Graph Theory} \label{subsection_graph_theory}
A directed graph, a digraph in short, can be described by a triplet $\mathcal{G}=(\mathcal{V}, \mathcal{E}, \mathcal{A})$, which is composed of a set $\mathcal{V}=\{1, \ldots, N\}$ of nodes, a collection $\mathcal{E} \subseteq \mathcal{V} \times \mathcal{V}$ of ordered pairs of nodes, called edges, and an adjacency matrix $\mathcal{A}$. 
For $ i,j\in\mathcal{V} $, the ordered pair $(j, i) \in \mathcal{E}$ denotes an edge from $ j $ to $ i $, in this case, node $ j $ is called an in-neighbor of node $ i $. $ \mathcal{N}_{i} $ denotes the set containing all the in-neighbors of agent $ i $. A directed path is an ordered sequence of nodes in which any pair of consecutive nodes is a directed edge. A self-loop is an edge from a node to itself. A digraph is said to be strongly connected if, for any node, there exists a directed path from any other node to itself. 
The adjacency matrix $\mathcal{A}=\left[a_{i j}\right] \in \mathbb{R}^{N \times N}$ is defined as $a_{i j}>0$ if $(j, i) \in \mathcal{E},$ otherwise $a_{i j}=0 $. Besides, $a_{i i}=0$ for all $ i $ since there is no self-loop. Moreover, the associated Laplacian matrix $\mathcal{L}=\left[l_{i j}\right] \in \mathbb{R}^{N \times N}$ is defined as $l_{i i}=\sum_{j=1}^{N} a_{i j}$, and $l_{i j}=-a_{i j}$ for $i \neq j $. A digraph $\mathcal{G}$ is weight balanced if and only if $\mathbf{1}_{N}^{\mathrm{T}} \mathcal{L}=\mathbf{0}_{N}^{\mathrm{T}}$. One may refer to \cite{Bullo2019lectures} for more details on graph theory.

\subsection{Convex Analysis} \label{subsection_convex_analysis}
A differentiable function $c: \mathbb{R}^{n} \rightarrow \mathbb{R}$ is said to be $\varpi $-strongly convex on $\mathbb{R}^{n}$ if there exists a constant $ \varpi>0 $ such that $(x-y)^{\operatorname{T}}(\nabla c(x)-\nabla c(y)) \geq \varpi\|x-y\|^{2}$ for all $x, y \in \mathbb{R}^{n} $. A function $g: \mathbb{R}^{n} \rightarrow \mathbb{R}^{n}$ is said to be globally Lipschitz on $\mathbb{R}^{n}$ if there exists a constant $ l>0 $ such that $\|g(x)-g(y)\| \leq l\|x-y\|$ for all $x, y \in \mathbb{R}^{n}$. One may refer to \cite{Bertsekas2009convex} for more details.

\subsection{Problem Formulation} \label{section problem formulation}

Consider a multi-agent system composed of $ N $ agents over a weight-unbalanced directed network. The dynamics of each agent are in the normal form described as follows,
\begin{equation}\label{dynamics}
	\begin{aligned} 
		\dot{z}_{i}=&f_{i 0}\left(z_{i}, x_{i 1}, v, w\right), \\
		\dot{x}_{i k}\!=&x_{i(k+1)}, \quad k=1,2, \ldots n_{i}-1,\\
		\dot{x}_{i n_{i}}\!=&f_{i 1}\left(z_{i}, x_{i}, v, w\right)+b_{i}(w) u_{i}, \\
		y_{i}\!=&x_{i 1}, \quad i=1,2, \ldots, N,
	\end{aligned}
\end{equation}
where $ \operatorname{col}(z_{i}, x_{i}) $ is the state with $ z_{i}\in\mathbb{R}^{n_{z_{i}}} $ and $ x_{i}=\operatorname{col}(x_{i1},x_{i2},\ldots,x_{in_{i}}) \in\mathbb{R}^{n_{i}}$, $u_{i} \in \mathbb{R}$ and $ y_{i} \in \mathbb{R}$ are the control input and measurement output, respectively. $ w \in\mathbb{R}^{n_{w}} $ represents the parametric uncertainty. $v \in \mathbb{R}^{n_{v}}$ is an exogenous signal representing the disturbance generated by an autonomous exosystem,
\begin{equation} \label{exosystem}
	\dot{v}=Sv,
\end{equation} 
where $ S\in\mathbb{R}^{n_{v}\times n_{v}} $. For $ i=1,2, \ldots, N $, it is assumed that $ f_{i 0} $, $f_{i 1}$ and $ b_{i} $ are sufficiently smooth functions satisfying $f_{i0}(0,0,0, w)=0$, $f_{i1}(0,0,0, w)=0$, and $ b_{i}(w)>0 $ for all $ w\in\mathbb{R}^{n_{w}} $.

\begin{remark}
	The DOOC problem of multi-agent systems with nonlinear agent dynamics in the normal form over undirected or balanced directed networks has been investigated in \cite{tang2020optimal,li2020distributed}. However, the following significant differences should be noted. Firstly, this note considers the same DOOC problem of uncertain nonlinear multi-agent systems with nonlinear agent dynamics in the same normal form but over unbalanced directed networks. Secondly, the inverse dynamics of agents are assumed to be ISS in \cite{tang2020optimal,li2020distributed}, while this work only requires the zero dynamics of agents to be globally asymptotically stable as well as locally exponentially stable. Thirdly, the nonlinear functions in \cite{li2020distributed} are required to be globally Lipschitz, while it is not required in this work. Fourthly, external disturbances generated by the exosystem (\ref{exosystem}) are considered in this work, while they are not considered in \cite{tang2020optimal,li2020distributed}. It is also worth pointing out that the exosystem (\ref{exosystem}) can produce a large class of external signals, such as sinusoidal, step and ramp type signals.
\end{remark}

In addition, it is assumed that each agent $ i $ is assigned with an individual local cost function $ c_{i}(s): \mathbb{R}\to \mathbb{R} $ with its decision variable $ s\in \mathbb{R} $. It should be noted that the local cost function $ c_{i}(\cdot) $ is only accessable to agent $ i $. Define the global cost function and its corresponding optimal solution as $ c(s)=\sum_{i=1}^{N}c_{i}(s) $ and $ s^{\star}\in \mathbb{R} $, respectively. To seek the optimal solution in a distributed manner, the controller design for each agent is only allowed to exploit avaliable information from its in-neighbors and itself. Specifically, the controller is expected to take the following form,
\begin{equation} \label{controller form_origin system}
	\begin{aligned}
	u_{i} = \kappa_{i1}\big(\nabla f_{i}, y_{i}, \upsilon_{i}\big), \quad
	\dot{\upsilon}_{i} = \kappa_{i2}\big(\nabla f_{i}, y_{i}, \upsilon_{j}, j\in \bar{\mathcal{N}}_{i}\big),
	\end{aligned}
\end{equation}
where $ \kappa_{i1} $ and $ \kappa_{i2} $ are functions vanishing at the origin, $ \bar{\mathcal{N}}_{i} = \mathcal{N}_{i} \cup \{i\} $ is a set containing the in-neighbors of agent $ i $ and itself, $ \upsilon_{i}\in \mathbb{R}^{n_{\upsilon_{i}}} $ is a state of the dynamic controller with its dimention $ n_{\upsilon_{i}} $ to be specified later.
Let $ x_{c}=\operatorname{col}(z_{1},x_{1},\upsilon_{1},\ldots,z_{N},x_{N},\upsilon_{N}) $ and $ n_{c}=\sum_{i=1}^{N}(n_{z_{i}}+n_{i}+n_{\upsilon_{i}}) $.
Then the following problem can be formulated.
\begin{problem} \label{problem}
	Consider the multi-agent system (\ref{dynamics}) and the exosystem (\ref{exosystem}) under the directed graph $ \mathcal{G} $ with local cost functions $ c_{i}(\cdot), i=1,2,\ldots,N $. Given any constant $ R>0 $ and any nonempty compact set $ \mathbb{V}_{0}\times\mathbb{W}\subseteq \mathbb{R}^{n_{v}\times n_{w}} $ containing the origin, design a distributed dynamic output feedback controller of the form (\ref{controller form_origin system}) such that, for any initial $ x_{c}(0)\in\bar{Q}_{R}^{n_{c}} $ and $ \operatorname{col}(v(0),w)\in \mathbb{V}_{0}\times \mathbb{W} $, the trajectories of the closed-loop system consisting of (\ref{dynamics}) and (\ref{controller form_origin system}) exist and are bounded for all $ t\geq 0 $, moreover, all the outputs $ y_{i}, i=1,2,\ldots,N $ converge to the optimal value $ s^{\star} $ that minimizes the global cost function $ c(s) = \sum_{i=1}^{n}c_{i}(s) $ as time goes to infinity.
\end{problem}

The following assumptions are needed for solving Problem \ref{problem}.

\begin{assumption} \label{assumption_cost functions}
	For $ i=1,2,\ldots,N $, the local cost function $ c_{i} $ is continuously differentiable and $ \varpi_{i} $-strongly convex, and $\nabla c_{i}$ is globally Lipschitz on $ \mathbb{R} $ with constant $ l_{i} $.
	\label{cost function assumption}
\end{assumption}

\begin{assumption} \label{graph assumption}
	The directed graph $ \mathcal{G} $ is strongly connected.
\end{assumption}

\begin{remark} \label{remark_cost function_2}
	Under Assumption \ref{assumption_cost functions}, the existence and uniqueness of the optimal solution $ s^{\star}\in \mathbb{R} $ can be guaranteed. Under Assumption \ref{graph assumption}, one is able to obtain a weight-balanced digraph by virtue of the graph balancing technique as in \cite{mei2015distributed,Bullo2019lectures}. Assumptions \ref{assumption_cost functions} and \ref{graph assumption} are commonly used in solving the distributed optimization problem over directed graphs, see, for example \cite{Kia2015distributed,Zhu2018continuous}.
\end{remark}

\begin{assumption}\label{assumption_zero_dynamics}
	For $ i=1,2,\ldots,N $, there exists a sufficiently smooth function $ z_{i}^{\star}(s,v,w) $ with $ z_{i}^{\star}(0,0,w)=0 $ such that, for any $ \operatorname{col}(v,w)\in \mathbb{R}^{n_{v}\times n_{w}} $ and $ s\in\mathbb{R} $,
	\begin{equation*}
		\frac{\partial z_{i}^{\star}(s,v,w)}{\partial v}Sv=f_{i0}(z_{i}^{\star}(s,v,w),s,v,w).
	\end{equation*}
\end{assumption}

\begin{assumption} \label{assumption_exosysem}
	The exosystem is neutrally stable, i.e., all the eigenvalues of $ S $ are semi-simple with zero real parts.
\end{assumption}

\begin{remark} \label{remark_d}
	Assumption \ref{assumption_zero_dynamics} is commonly used in solving the cooperative output regulation problem of multi-agent systems with nonlinear agent dynamics in the normal form \cite{su2014cooperative,su2019semi}.
	Under Assumption \ref{assumption_exosysem}, given any compact set $ \mathbb{V}_{0} $, it can be shown that for any $ v(0)\in \mathbb{V}_{0} $, the trajectory $ v(t) $ of the exosystem (\ref{exosystem}) remains in some compact set $ \mathbb{V} $ for all $ t\geq 0 $.
\end{remark}

\section{Main Results}\label{section main results}

In this section, based on a two-layer controller structure, a distributed output feedback controller is developed to solve Problem \ref{problem}. In the upper layer, the concerned DOOC problem is first converted to a reference-tracking problem by designing a distributed optimal coordinator for each agent, which cooperates with others to generate a local reference signal that eventually converges to the optimal solution. Then in the lower layer, by virtue of an internal model for handling the external disturbance and a high-gain observer for estimating the agent states, a decentralized output feedback stabilizer is proposed to address the augmented system composed of the resulting error system and the internal model.

\subsection{Distributed Optimal Coordinator Design}
In this subsection, the distributed optimal coordinator is designed to generate the optimal solution for the concerned multi-agent system over the unbalanced directed network. Then the proposed optimal coordinator is embedded into the feedback loop to convert the DOOC problem to a reference-tracking problem, which will be addressed in the next subsection. Specifically, inspired by \cite{Zhu2018continuous,zhang2021optimal}, the optimal coordinator for each agent $ i $ is designed as follows,
\begin{align} 
	\begin{split} \label{algorithm_r}
		&\dot{y}_{i}^{r} \!=-\frac{1}{\xi_{i}^{i}} \nabla c_{i}\left(y_{i}^{r}\right)-\alpha_{1} \sum_{i=1}^{N} a_{i j}\left(y_{i}^{r}-y_{j}^{r}\right)-\alpha_{2} \zeta_{i}, \\[-1mm]
		&\dot{\zeta}_{i} =\alpha_{1} \sum_{i=1}^{N} a_{i j}\left(y_{i}^{r}-y_{j}^{r}\right), \quad \zeta_{i}(0)=0,
	\end{split}  \quad ~~\!\Bigggl\}\\
	\begin{split} \label{algorithm_r_b}
		&\dot{\xi}_{i}  = -\sum_{j=1}^{N}a_{ij}(\xi_{i}-\xi_{j}),\quad i=1,2,\ldots,N,
	\end{split}  \qquad \qquad \quad \!\Big\}
\end{align}
where $ \alpha_{1} $ and $ \alpha_{2} $ are two positive constants, $ y_{i}^{r}\in \mathbb{R} $ represents the generated reference signal for agent $ i $, $ \zeta_{i}\in \mathbb{R} $  and $ \xi_{i}\in\mathbb{R}^{N} $ are auxiliary variables, with $ \xi_{i}^{k} $ being the $ k $-th component of $ \xi_{i} $ and initial value $ \xi_{i}(0) $ satisfying
$ \xi_{i}^{i}(0) = 1 $, otherwise $ \xi_{i}^{k}(0) = 0 $ for all $ k \neq i $. With these choices, it is shown in \cite{zhang2021optimal} that $ \xi_{i}^{i}(t)>0 $ for all $t \geq 0$, which means that the algorithm (\ref{algorithm_r}) is well defined.

It should be emphasized that, under Assumption \ref{graph assumption}, it can be proved that $ \lim_{t \rightarrow \infty}\xi_{i}^{i}= r_{i} $, where $ r_{i} $ represents the $ i $-th component of the left eigenvector $ r $ corresponding to the zero eigenvalue of the Laplacian matrix \cite{zhang2021optimal}. Note that the optimal coordinator of each agent only requires information from its neighbors and itself, and thus is distributed.

\begin{remark}
	Based on the graph balancing technique developed in \cite{mei2015distributed,Bullo2019lectures}, a distributed optimization algorithm similar to the optimal coordinator (\ref{algorithm_r}), but with its gradient term being divided by the $ i $-th component of the left eigenvector $ r $ corresponding to the zero eigenvalue of the Laplacian matrix, is proposed in \cite{Li2017distributed}. However, it is worth noting that the left eigenvector is a kind of global information, and thus may not be known\textit{ a priori}. Instead, the gradient term $ \nabla c_{i} $ in (\ref{algorithm_r}) is divided by $ \xi_{i}^{i} $, which is generated by the autonomous system (\ref{algorithm_r_b}) and provides an alternative so that the explicit dependence on the left eigenvector can be removed.
\end{remark}

The following theorem shows that the distributed optimal coordinator (\ref{algorithm_r}) and (\ref{algorithm_r_b}) is capable of generating the optimal solution $ s^{\star} $. Due to space limitations, its proof is omitted here. One may refer to Theorem 1 in our preliminary work \cite{zhang2021optimal} for details.

\begin{theorem} \label{proposition1} 
	Consider the distributed optimal coordinator (\ref{algorithm_r}) and (\ref{algorithm_r_b}) under Assumptions \ref{assumption_cost functions} and \ref{graph assumption}. Given $ \zeta_{i}(0)=0 $ and any initial conditions $ y_{i}^{r}(0) $ for $ i=1,2,\ldots,N $, there exist sufficiently large positive constants  $ \alpha_{1}$ and $ \alpha_{2} $ such that the generated reference signals $ y_{i}^{r}, i=1,2,\ldots,N $ are bounded for all $ t\geq 0 $, and exponentially converge to the optimal solution $ s^{\star} $ that minimizes the global cost function.
\end{theorem}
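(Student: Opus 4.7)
The plan is to exploit the cascade structure between the auxiliary dynamics (\ref{algorithm_r_b}) and the gradient-flow dynamics (\ref{algorithm_r}). First I would analyze (\ref{algorithm_r_b}) on its own. Stacking the rows gives $\dot\xi_i = -\mathcal L \xi_i$ for every $i$, which is a standard consensus protocol on a strongly connected digraph. Since $\mathcal L$ has a simple zero eigenvalue with right eigenvector $\mathbf 1_N$ and positive left eigenvector $r$ (normalized by $\mathbf 1_N^{\mathrm T} r = 1$), the Perron--Frobenius theorem applied to $e^{-\mathcal L t}$ yields $\xi_i(t) \to (r^{\mathrm T}\xi_i(0))\,\mathbf 1_N$ exponentially fast. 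With the prescribed initialization $\xi_i^i(0)=1$ and $\xi_i^k(0)=0$ for $k\ne i$, this gives $\xi_i^i(t) \to r_i$ exponentially; in particular $1/\xi_i^i(t)$ stays in a positive compact interval and converges exponentially to $1/r_i$.

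Next I would freeze $\xi_i^i = r_i$ and analyze the resulting nominal system
\begin{equation*}
\dot y^r = -D^{-1}\nabla c(y^r) - \alpha_1 \mathcal L y^r - \alpha_2 \zeta, \qquad \dot\zeta = \alpha_1 \mathcal L y^r,
\end{equation*}
where $D=\operatorname{diag}(r_1,\ldots,r_N)$ and $\nabla c(y^r)=\operatorname{col}(\nabla c_1(y_1^r),\ldots,\nabla c_N(y_N^r))$. Using $r^{\mathrm T}\mathcal L=0$ together with $\zeta(0)=0$, the quantity $r^{\mathrm T}\zeta(t)$ is a conserved quantity equal to zero along trajectories. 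Combining this conservation law with the equilibrium conditions $\mathcal L y^r=0$, $D^{-1}\nabla c(y^r)+\alpha_2\zeta=0$ and the optimality condition $\sum_i\nabla c_i(s^\star)=0$, the unique equilibrium is $(y^{r\star},\zeta^\star)=(s^\star\mathbf 1_N,\,-\alpha_2^{-1}D^{-1}\nabla c(s^\star\mathbf 1_N))$. In error coordinates $\tilde y = y^r - s^\star\mathbf 1_N$ and $\tilde\zeta=\zeta-\zeta^\star$, I would construct a composite Lyapunov function of the form
\begin{equation*}
V = \tfrac12 \tilde y^{\mathrm T} D \tilde y + \tfrac{1}{2\alpha_2}\|\tilde\zeta\|^2 + \epsilon\, \tilde y^{\mathrm T}\tilde\zeta,
\end{equation*}
with a small cross weight $\epsilon>0$. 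The derivative along the nominal dynamics contains: a negative definite piece $-\tilde y^{\mathrm T}(\nabla c(y^r)-\nabla c(s^\star\mathbf 1_N))\le -\varpi_{\min}\|\tilde y\|^2$ from Assumption \ref{assumption_cost functions}; a negative semidefinite piece $-\tfrac{\alpha_1}{2}\tilde y^{\mathrm T}(D\mathcal L+\mathcal L^{\mathrm T}D)\tilde y$ arising from the balanced symmetrization of the weighted Laplacian, which dominates the consensus-orthogonal direction of $\tilde y$; and indefinite cross terms controlled via Young's inequality using the Lipschitz constants $l_i$. Choosing $\alpha_1,\alpha_2$ sufficiently large then yields $\dot V \le -\lambda V$ for some $\lambda>0$, i.e., global exponential stability of the nominal system.

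Finally, the actual closed loop (\ref{algorithm_r})--(\ref{algorithm_r_b}) differs from the nominal system only by the perturbation $(\xi_i^i(t)^{-1}-r_i^{-1})\nabla c_i(y_i^r)$, which by the global Lipschitz property of $\nabla c_i$ is bounded by a constant times $|\xi_i^i(t)-r_i|\cdot(1+|y_i^r|)$; this perturbation decays exponentially and is locally Lipschitz in the state. A standard cascade/ISS argument (or a direct Gronwall estimate on $V$) then lifts exponential stability of the nominal system to exponential convergence $y_i^r(t)\to s^\star$ of the full system, with boundedness of $(y^r,\zeta)$ for all $t\ge 0$ obtained by invariance of a sublevel set of $V$ enlarged by the perturbation. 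The main obstacle is clearly the Lyapunov construction in the middle step: because the digraph is unbalanced, the natural choice $\tfrac12\|\tilde y\|^2$ fails, and it is essential to exploit the fact that $D\mathcal L+\mathcal L^{\mathrm T}D$ is a positive semidefinite ``balanced'' Laplacian with null space $\operatorname{span}(\mathbf 1_N)$, so that the strong convexity term can handle the consensus component of $\tilde y$ while the Laplacian term handles its orthogonal complement.
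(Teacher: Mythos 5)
Your overall route---treat (\ref{algorithm_r_b}) as a consensus stage giving $\xi_i^i(t)\to r_i$ exponentially, establish exponential stability of the frozen system with $D=\operatorname{diag}(r_1,\ldots,r_N)$ through a weighted Lyapunov function that exploits positive semidefiniteness of the mirror Laplacian $D\mathcal L+\mathcal L^{\mathrm T}D$, and then absorb the exponentially vanishing mismatch $(1/\xi_i^i-1/r_i)\nabla c_i(y_i^r)$ by a cascade argument---is the natural one and, in outline, matches the argument the paper defers to in \cite{zhang2021optimal} (the proof is omitted here, so a line-by-line comparison is not possible). One minor slip: the stacked form of (\ref{algorithm_r_b}) is not $\dot\xi_i=-\mathcal L\xi_i$ agent by agent; the coupling is across agents (collecting the $k$-th components $\chi^k=\operatorname{col}(\xi_1^k,\ldots,\xi_N^k)$ gives $\dot\chi^k=-\mathcal L\chi^k$, i.e.\ $\Xi(t)=e^{-\mathcal L t}$ with $\Xi(0)=I_N$), so $\xi_i(t)\to r$ rather than $r_i\mathbf 1_N$; the fact you actually use, namely $\xi_i^i(t)\to r_i>0$ exponentially with $\xi_i^i(t)$ bounded away from zero on $[0,\infty)$, is nevertheless correct.

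The substantive gap is in the middle step. First, $\dot V\le-\lambda V$ cannot hold globally for the nominal system: off the invariant subspace $\{r^{\mathrm T}\tilde\zeta=0\}$ there is a continuum of equilibria $(s\mathbf 1_N,\,-\alpha_2^{-1}D^{-1}\nabla c(s\mathbf 1_N))$, $s\in\mathbb R$, at which $\dot V=0$ while $V>0$. The estimate must therefore be carried out on the subspace enforced by $\zeta(0)=0$, and the identity $r^{\mathrm T}\tilde\zeta\equiv 0$ (equivalently $D\tilde\zeta\perp\mathbf 1_N$) must be used \emph{inside} the derivative bound, not only to pin the equilibrium: the derivative of your $V$ contains the cross term $-\alpha_2\tilde y^{\mathrm T}D\tilde\zeta$, which is $O(\alpha_2)$, whereas the only damping acting on the consensus component of $\tilde y$ is the $O(1)$ strong-convexity term; the term is manageable only because it pairs with the disagreement component $\Pi\tilde y$ ($\Pi$ the projector onto $\mathbf 1_N^{\perp}$) and can be traded against the $O(\alpha_1)$ mirror-Laplacian term. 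Second, even on the subspace, with your fixed weights ($\tfrac12\tilde y^{\mathrm T}D\tilde y$, $\tfrac1{2\alpha_2}\|\tilde\zeta\|^2$, small $\epsilon$) the bookkeeping does not close for free: the cross term generates $\epsilon\alpha_1\tilde y^{\mathrm T}\mathcal L\tilde y$ and $(\alpha_2^{-1}-\epsilon)\alpha_1\tilde\zeta^{\mathrm T}\mathcal L\tilde y$, and Young's inequality against the available budgets $\varpi_{\min}\|\tilde y\|^2$, $\tfrac{\alpha_1\lambda_2}{2}\|\Pi\tilde y\|^2$ (with $\lambda_2$ the smallest nonzero eigenvalue of $D\mathcal L+\mathcal L^{\mathrm T}D$) and $\epsilon\alpha_2\|\tilde\zeta\|^2$ leads to a requirement of the type $\tfrac{\alpha_1\lambda_2}{2}>\alpha_2^2\|D\|^2+\alpha_1^2\|\mathcal L\|^2/(\alpha_2^2\varpi_{\min})$, which is solvable in $\alpha_1$ only under a graph/convexity condition such as $\varpi_{\min}\lambda_2^2\gtrsim\|D\|^2\|\mathcal L\|^2$ that need not hold; so ``choose $\alpha_1,\alpha_2$ sufficiently large'' is not by itself a proof for the stated $V$. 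The repair is to co-tune the Lyapunov weights with the gains, e.g.\ take the cross weight equal to the $\|\tilde\zeta\|^2$ weight so the $\alpha_1\tilde\zeta^{\mathrm T}\mathcal L\tilde y$ terms cancel exactly, scale both like $1/\alpha_1$ (equivalently weight the $\tilde y$-quadratic by $\alpha_1$), and then pick $\alpha_1$ large relative to $\alpha_2^2$; with that adjustment, and with positive definiteness of $V$ rechecked, your cascade/Gr\"onwall step does deliver boundedness and exponential convergence of $y_i^r$ to $s^{\star}$ as claimed.
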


In what follows, Problem \ref{problem} will be converted to a reference-tracking problem by taking $ y_{i}^{r} $ as a reference to be tracked by agent $ i $. To this end, define $x_{i}^{\star}=\operatorname{col}(y_{i}^{r}, \mathbf{0}_{n_{i}-1})$. Let $\bar{z}_{i}=z_{i}-z_{i}^{\star}\left(s^{\star},v, w\right)$ and $ \bar{x}_i=x_{i}-x_{i}^{\star} $. Then, the dynamics of $ \bar{z}_{i} $ and $ \bar{x}_{i} $ can be described as follows, 
\begin{equation} \label{error_system}
	\begin{aligned}
		\dot{\bar{z}}_{i}= & \bar{f}_{i 0}\left(\bar{z}_{i}, \bar{x}_{i 1}, y_{i}^{r}, v, w\right)+\hat{f}_{i 0}\left(y_{i}^{r}, s^{\star}, v, w\right), \\
		\dot{\bar{x}}_{i 1}= & \bar{x}_{i 2}-\dot{y}_{i}^{r}, \\
		\dot{\bar{x}}_{i k}= & \bar{x}_{i(k+1)}, \quad k=2,3, \ldots n_{i}-1, \\
		\dot{\bar{x}}_{i n_{i}} = & \bar{f}_{i 1}\left(\bar{z}_{i}, \bar{x}_{i}, y_{i}^{r}, v, w\right)+\bar{f}_{i 2}\left(y_{i}^{r}, s^{\star}, v, w\right)\\
		&+b_{i}(w)\left(u_{i}-u_{i}^{\star}\left(s^{\star}, v, w\right)\right),
	\end{aligned}
\end{equation}
where 
\begin{align*}
	\bar{f}_{i 0}\left(\bar{z}_{i}, \bar{x}_{i 1}, y_{i}^{r}, v, w\right)=&f_{i 0}\left(\bar{z}_{i}+z_{i}^{\star}, \bar{x}_{i 1}+y_{i}^{r}, v, w\right) \\
	&-f_{i 0}\left(z_{i}^{\star}, y_{i}^{r}, v, w\right),\\
	\hat{f}_{i 0}\left(y_{i}^{r}, s^{\star}, v, w\right)=&f_{i 0}\left(z_{i}^{\star}, y_{i}^{r}, v, w\right)-f_{i 0}\left(z_{i}^{\star}, s^{\star}, v, w\right),\\
	\bar{f}_{i 1}\left(\bar{z}_{i}, \bar{x}_{i}, y_{i}^{r}, v, w\right)= & f_{i 1}\left(\bar{z}_{i}+z_{i}^{\star}, \bar{x}_{i}+x_{i}^{\star}, v, w\right)\\
	&-f_{i 1}\left(z_{i}^{\star}, x_{i}^{\star}, v, w\right),\\
	\bar{f}_{i 2}\left(y_{i}^{r}, s^{\star}, v, w\right)=& f_{i 1}\left(z_{i}^{\star}, y_{i}^{r}, \mathbf{0}, v, w\right)- f_{i 1}\left(z_{i}^{\star}, s^{\star}, \mathbf{0}, v, w\right),\\
	u_{i}^{\star}\left(s^{\star}, v, w\right)=& -f_{i 1}\left(z_{i}^{\star}, s^{\star}, \mathbf{0}, v, w\right) / b_{i}(w).
\end{align*}
Recalling that $ b_{i}(w)>0 $ for all $ w\in\mathbb{W} $, $ u_{i}^{\star}\left(s^{\star}, v, w\right) $ is well-defined. Moreover, it follows from the smoothness of $ f_{i k}$, $ k=0,1 $ that $ \bar{f}_{i k}$, $k=0,1$ are sufficiently smooth functions satisfyinig $ \bar{f}_{i k}\left(\mathbf{0},\mathbf{0}, y_{i}^{r}, v,w\right)=0$ for all $ y_{i}^{r}\in\mathbb{R} $ and $ \operatorname{col}(v,w)\in \mathbb{V}\times \mathbb{W} $. Due to the presence of uncertain parameter $ w $, the feed-forward term $ u_{i}^{\star}\left(s^{\star}, v,w\right) $ in (\ref{error_system}) is unavailable. To tackle this challenge, we need an additional standard assumption \cite{huang2004nonlinear}.

\begin{assumption} \label{assumption_u}
	For $ i=1,2,\ldots,N $, $u_{i}^{\star}\left(s^{\star}, v,w\right)  $ is a polynomial in $v$ with coefficients depending on $ s^{\star} $ and $w$.
\end{assumption}

Now, we are ready to design an internal model to generate the feed-forward term $ u_{i}^{\star}\left(s^{\star}, v,w\right) $. Specifically, under Assumptions \ref{assumption_exosysem} and \ref{assumption_u}, there exist integers $s_{i}, i=1,2, \ldots, N$ such that for all $w \in \mathbb{W} $, one has $ \frac{\mathrm{d}^{s_{i}} u_{i}^{\star}\left(s^{\star}, v,w\right)}{\mathrm{d} t^{s_{i}}}= \ell_{i 1} u_{i}^{\star}\left(s^{\star}, v,w\right)+\ell_{i 2} \frac{\mathrm{d} u_{i}^{\star}\left(s^{\star}, v,w\right)}{\mathrm{d} t} +\cdots+\ell_{i s_{i}} \frac{\mathrm{d}^{\left(s_{i}-1\right)} u_{i}^{\star}\left(s^{\star}, v,w\right)}{\mathrm{d} t^{\left(s_{i}-1\right)}} $, 
where $\ell_{i 1},\ell_{i 2}, \ldots \ell_{i s_{i}}$, $i=1,2, \ldots, N$ are scalars such that the roots of the polynomials $P_{i}(\sigma)=\sigma^{s_{i}}-\ell_{i 1}-\ell_{i 2} \sigma-\cdots-\ell_{i s_{i}} \sigma^{s_{i}-1}$ are distinct with zero real parts. Define
\begin{align*}
	\Phi_{i}=\left[\begin{array}{c|c}
		\mathbf{0}_{\left(s_{i}-1\right) \times 1} & I_{s_{i}-1} \\
		\hline \ell_{i 1} & \ell_{i 2}, \ldots, \ell_{i s_{i}}
	\end{array}\right], 
	\quad \Gamma_{i}=\left[\begin{array}{c}
		1 \\
		\mathbf{0}_{\left(s_{i}-1\right) \times 1}
	\end{array}\right]^{T}.
\end{align*}
Let $\tau_{i}(s^{\star}, v,w)=\operatorname{col}\big(u_{i}^{\star}, \mathrm{d} u_{i}^{\star}/\mathrm{d} t, \ldots, \mathrm{d}^{\left(s_{i}-1\right)} u_{i}^{\star}/\mathrm{d} t^{\left(s_{i}-1\right)}\big)$. One thus has $ \dot{\tau}_{i}=\Phi_{i} \tau_{i}, u_{i}^{\star}=\Gamma_{i} \tau_{i} $. Let $\left(M_{i}, N_{i}\right), i=1,2, \ldots, N$ be any controllable pairs, where $M_{i} \in \mathbb{R}^{s_{i} \times s_{i}}$ is a Hurwitz matrix, and $N_{i} \in \mathbb{R}^{s_{i} \times 1}$ is a column vector. Then the following internal model is proposed,
\begin{equation} \label{internal_model}
	\dot{\eta}_{i}=M_{i} \eta_{i}+N_{i} u_{i}, \quad i=1,2, \ldots, N.
\end{equation}

Since the spectra of $\Phi_{i}$ and $M_{i}$ are disjoint, there exists a nonsingular matrix $T_{i}$ satisfying the Sylvester equation $ T_{i} \Phi_{i}-M_{i} T_{i}=N_{i} \Gamma_{i} $.
Let $\bar{\eta}_{i}=\eta_{i}-T_{i} \tau_{i}\left(s^{\star}, v,w\right)$ and $\bar{u}_{i}=u_{i}-\Gamma_{i} T_{i}^{-1} \eta_{i}$. One then has $ \dot{\bar{\eta}}_{i}=\left(M_{i}+N_{i} \Gamma_{i} T_{i}^{-1}\right) \bar{\eta}_{i}+N_{i} \bar{u}_{i} $.
Thus, we can obtain the following augmented error system,
\begin{equation} \label{augmented_system}
	\begin{aligned}
		\dot{\bar{z}}_{i}= &\bar{f}_{i 0}\left(\bar{z}_{i}, \bar{y}_{i}, y_{i}^{r}, v, w\right)+\hat{f}_{i 0}\left(y_{i}^{r}, s^{\star}, v, w\right), \\
		\dot{\bar{x}}_{i}= & A_{i}\bar{x}_{i}+B_{i}\Big[\bar{f}_{i 1}\left(\bar{z}_{i}, \bar{x}_{i}, y_{i}^{r}, v, w\right)+\bar{f}_{i 2}\left(y_{i}^{r}, s^{\star}, v, w\right)\\
		&+b_{i}(w) \Gamma_{i} T_{i}^{-1} \bar{\eta}_{i}+b_{i}(w) \bar{u}_{i}\Big]-E_{i}\dot{y}_{i}^{r},\\
		\dot{\bar{\eta}}_{i}=&\left(M_{i}+N_{i} \Gamma_{i} T_{i}^{-1}\right) \bar{\eta}_{i}+N_{i} \bar{u}_{i},\\
		\bar{y}_{i}=&\bar{x}_{i 1},
	\end{aligned}
\end{equation}
where
\begin{equation*}
	A_{i}=\Big[\begin{array}{c|c}
		\mathbf{0}_{n_{i}-1} & I_{n_{i}-1} \\
		\hline 0 & \mathbf{0}_{n_{i}-1}^{\operatorname{T}}
	\end{array}\Big], 
	B_{i}=\Big[\begin{array}{c}
		\mathbf{0}_{n_{i}-1} \\
		1
	\end{array}\Big], 
	E_{i}=\Big[\begin{array}{c}
		1 \\
		\mathbf{0}_{n_{i}-1}
	\end{array}\Big].
\end{equation*}

\vspace{2mm}
In what follows, we will develop a distributed dynamic output feedback controller of the following form to stabilize the augmented error system (\ref{augmented_system}) semi-globally,
\begin{equation} \label{controller form_error system}
	\begin{aligned}
		\bar{u}_{i} =  \beta_{\delta}\big(\bar{\kappa}_{i1}(\bar{x}_{i},\bar{\upsilon}_{i})\big),\quad
		\dot{\bar{\upsilon}}_{i} =  \bar{\kappa}_{i2}\big(y_{i}, \bar{\upsilon}_{j}, j\in \bar{\mathcal{N}}_{i}\big),
	\end{aligned}
\end{equation}
with $ \beta_{\delta}(\cdot) $ being a saturation function defined as follows,
\begin{equation*}
	\beta_{\delta}(r)= \begin{cases}r & \text { if }|r|<\delta, \\ \operatorname{sgn}(r) \delta & \text { if }|r| \geq \delta,\end{cases}
\end{equation*}
where $  \delta > 0 $ is a constant to be designed, $ \bar{\kappa}_{i1} $ and $ \bar{\kappa}_{i2} $ are sufficiently smooth functions vanishing at the origin, $ \bar{\mathcal{N}}_{i} = \mathcal{N}_{i} \cup \{i\} $ is defined to be the same as that in (\ref{controller form_origin system}), and $ \bar{\upsilon}_{i}\in \mathbb{R}^{n_{\bar{\upsilon}_{i}}} $ is the state of the dynamic controller with its dimention $ n_{\bar{\upsilon}_{i}} $ to be specificed later. Let $ \bar{x}_{c}=\operatorname{col}(\bar{z}_{1},\bar{x}_{1},\bar{\eta}_{1},\bar{\upsilon}_{1},y_{1}^{r},\zeta_{1},\ldots,\bar{z}_{N},\bar{x}_{N},\bar{\eta}_{N},\bar{\upsilon}_{N},y_{N}^{r},\zeta_{N}) $ and $ n_{c}=\sum_{i=1}^{N}(n_{z_{i}}+n_{i}+s_{i}+n_{\bar{\upsilon}_{i}}+2) $. The reference-tracking problem is defined as follows.

\vspace{2mm}
\begin{problem} \label{problem_2}
	 Consider the augmented error system (\ref{augmented_system}) and the optimal coordinator (\ref{algorithm_r}). Under Assumptions 1-4, given any constant $ \bar{R}>0 $ and any nonempty compact set $ \mathbb{V}\times\mathbb{W}\subseteq \mathbb{R}^{n_{v}+n_{w}} $ containing the origin, design a distributed dynamic output feedback controller of the form (\ref{controller form_error system}) such that, for any $\bar{x}_{c}(0)\in\bar{Q}_{\bar{R}}^{n_{c}} $ and $ \operatorname{col}(v,w)\in \mathbb{V}\times\mathbb{W} $, the trajectories of the closed-loop system composed of (\ref{algorithm_r}), (\ref{augmented_system}) and (\ref{controller form_error system}) starting from $ \bar{x}_{c}(0) $ are bounded for all $ t\geq 0 $, and the output errors $ \bar{y}_{i}, i=1,2,\ldots,N $ tend to zero as time goes to infinity.
\end{problem}

\vspace{2mm}
The following lemma shows that Problem \ref{problem} is solved as long as Problem \ref{problem_2} is solved.

\begin{lemma} \label{lemma_stabilization}
	Under Assumptions \ref{assumption_cost functions}-\ref{assumption_exosysem}, if the reference-tracking Problem \ref{problem_2} is solved by a dynamic output feedback controller of the form (\ref{controller form_error system}), then the Problem \ref{problem} can be solved by a distributed dynamic controller composed of (\ref{algorithm_r}), (\ref{internal_model}) and (\ref{controller form_error system}).
\end{lemma}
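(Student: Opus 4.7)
The plan is to unwind the coordinate transformations introduced between (\ref{error_system}) and (\ref{augmented_system}) and verify that each property required by Problem \ref{problem} follows from the corresponding property guaranteed by Problem \ref{problem_2}, with Theorem \ref{proposition1} bridging the reference layer.

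First, I would observe that the composite controller (\ref{algorithm_r})--(\ref{algorithm_r_b}), (\ref{internal_model}), (\ref{controller form_error system}) is of the form (\ref{controller form_origin system}) once one collects $\upsilon_i=\operatorname{col}(y_i^r,\zeta_i,\xi_i,\eta_i,\bar{\upsilon}_i)$: the coordinator uses only $\nabla c_i(y_i^r)$ and neighbor data, the internal model uses only $u_i$, and the stabilizer uses only $\bar{x}_i=x_i-x_i^\star$ together with $\bar{\upsilon}_j$ for $j\in\bar{\mathcal{N}}_i$, which in turn are computable from $y_i=x_{i1}$ and the controller states. Thus the candidate controller is indeed admissible for Problem \ref{problem}.

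Second, I would propagate the compact initial-data sets through the coordinate change. Under Assumption \ref{assumption_exosysem} and Remark \ref{remark_d}, for every $v(0)\in\mathbb{V}_0$ the exosystem trajectory stays in a compact $\mathbb{V}$. By Theorem \ref{proposition1}, with $\zeta_i(0)=0$ and any $y_i^r(0)$ drawn from $\bar{Q}_{R}^{n_c}$ there exist $\alpha_1,\alpha_2$ such that $y_i^r(t)$ is bounded on $[0,\infty)$ and $y_i^r(t)\to s^\star$ exponentially. Combining this with the smoothness of $z_i^\star$, $\tau_i$, and $T_i$ on the compact $\mathbb{V}\times\mathbb{W}$, the static maps
\begin{equation*}
\bar{z}_i=z_i-z_i^\star(s^\star,v,w),\quad \bar{x}_i=x_i-x_i^\star,\quad \bar{\eta}_i=\eta_i-T_i\tau_i(s^\star,v,w)
\end{equation*}
are continuous and bounded on compacts, so there exists $\bar{R}>0$, depending only on $R$, $\mathbb{V}_0$ and $\mathbb{W}$, such that $\bar{x}_c(0)\in\bar{Q}_{\bar{R}}^{n_c}$ whenever $x_c(0)\in\bar{Q}_{R}^{n_c}$ and $\operatorname{col}(v(0),w)\in\mathbb{V}_0\times\mathbb{W}$. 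Hence the hypothesis that Problem \ref{problem_2} is solvable applies, and yields boundedness of $(\bar{z}_i,\bar{x}_i,\bar{\eta}_i,\bar{\upsilon}_i)$ for all $t\geq0$ together with $\bar{y}_i(t)\to 0$.

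Third, I would invert the transformation to recover the conclusions of Problem \ref{problem}. Boundedness of $v$, $y_i^r$, $\zeta_i$, $\bar{z}_i$, $\bar{x}_i$, $\bar{\eta}_i$, and $\bar{\upsilon}_i$, together with boundedness of $z_i^\star(s^\star,v,w)$, $x_i^\star=\operatorname{col}(y_i^r,\mathbf{0})$ and $T_i\tau_i(s^\star,v,w)$, implies boundedness of the original closed-loop state $x_c(t)$. For the asymptotic statement, note $\bar{y}_i=x_{i1}-y_i^r=y_i-y_i^r$, so
\begin{equation*}
\lim_{t\to\infty}y_i(t)=\lim_{t\to\infty}\bar{y}_i(t)+\lim_{t\to\infty}y_i^r(t)=0+s^\star=s^\star,
\end{equation*}
which is exactly the requirement of Problem \ref{problem}. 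The only mildly delicate point, and the one I would dwell on, is verifying that the compact set $\bar{Q}_{\bar{R}}^{n_c}$ produced in the second step is large enough uniformly over all admissible $(v(0),w)\in\mathbb{V}_0\times\mathbb{W}$; this reduces to a routine continuity-on-compacts estimate but is what makes the implication in the lemma nontrivial rather than tautological.
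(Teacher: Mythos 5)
Your proposal is correct and follows essentially the same route as the paper's proof: map the compact set of initial conditions for $x_c$ into a compact set $\bar{Q}_{\bar{R}}^{n_c}$ for $\bar{x}_c$ (using Remark \ref{remark_d} for $v$), invoke the solvability of Problem \ref{problem_2} for boundedness and $\bar{y}_i\to 0$, invert the (bounded, static) coordinate change to recover boundedness of $x_c$, and combine $\bar{y}_i\to 0$ with Theorem \ref{proposition1} via the triangle inequality to get $y_i\to s^\star$. Your explicit admissibility check of the controller form and the uniform-in-$(v(0),w)$ choice of $\bar{R}$ merely spell out steps the paper treats implicitly (the latter via "comparing the definitions of $x_c$ and $\bar{x}_c$" and a citation to the cooperative output regulation literature).
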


\begin{proof}
	As stated in Remark \ref{remark_d}, given any compact set $ \mathbb{V}_{0} \subseteq \mathbb{R}^{n_{v}} $, there exists a compact set $ \mathbb{V} \subseteq \mathbb{R}^{n_{v}} $ such that $ v(t)\in \mathbb{V} $ for all $ t\geq 0 $. By comparing the definitions of $ x_{c} $ and $ \bar{x}_{c} $, given any $R>0$ and $x_{c}(0) \in \bar{Q}_{R}^{n_{c}}$, there exists $\bar{R}>0$ such that $\bar{x}_{c}(0) \in \bar{Q}_{\bar{R}}^{n_{c}}$. Note that Problem \ref{problem_2} is solved means that the trajectory $ \bar{x}_{c}(t) $ of the closed-loop system starting from $ \bar{x}_{c}(0)\in\bar{Q}_{\bar{R}}^{n_{c}} $ is bounded for all $ t\geq 0 $, and the output errors $ \bar{y}_{i}, i=1,2,\ldots,N $ converge to zero. Thus, on the one hand, by similar arguments as in \cite{su2014cooperative}, for any $x_{c}(0)\in\bar{Q}_{R}^{n_{c}} $ and $ \operatorname{col}(v,w)\in \mathbb{V}\times\mathbb{W} $, $ x_{c}(t) $ can be shown to be bounded for all $ t\geq 0 $. On the other hand, one obtains that $ \lim_{t\to\infty} y_{i} = y_{i}^{r} $. It follows from Theorem \ref{proposition1} that $ y_{i}^{r} $ is bounded for all $ t\geq 0 $ and $ \lim_{t\to\infty} y_{i}^{r} = s^{\star} $. Therefore, by the triangle inequality $ |y_{i}-s^{\star}|\leq |y_{i}-y_{i}^{r}| + |y_{i}^{r}-s^{\star}| $, one can conclude that $ \lim_{t\to\infty} y_{i} = s^{\star} $. In summary, Problem \ref{problem} is solved as long as Problem \ref{problem_2} is solved.
\end{proof}

\subsection{Reference-Tracking Controller Design}

In this subsection, we focus on developing an output feedback controller to solve the reference-tracking Problem \ref{problem_2} for the augmented error system (\ref{augmented_system}). 
At first, we transfer (\ref{augmented_system}) to a new system of relative degree one. To this end, define
\begin{align*}
	\hat{x}_{i k}= & \tfrac{\bar{x}_{i k}}{g^{k-1}}, \quad k=1, \ldots, n_{i}-1,\\
	\vartheta_{i}= & \bar{x}_{i n_{i}}+ g \gamma_{i (n_{i}-1)} \bar{x}_{i (n_{i}-1)}+\cdots +g^{n_{i}-1} \gamma_{i 1} \bar{x}_{i 1},\\
	\tilde{\eta}_{i}= & \bar{\eta}_{i}-b_{i}^{-1}(w) N_{i} \vartheta_{i},
\end{align*}
where $g>0$ is a constant to be determined later, and the coefficients $\gamma_{i k}, k=1,2, \ldots, n_{i}-1 $ are chosen such that the polynomials $\lambda^{n_{i}-1}+\gamma_{i \left(n_{i}-1\right)} \lambda^{n_{i}-2}+\cdots+\gamma_{i 2} \lambda+\gamma_{i 1}, i=1,2,\ldots, N$ are all Hurwitz. Then, one can obtain that
\begin{align*}
	\dot{\hat{x}}_{i 1}=&g \hat{x}_{i 2}-\dot{y}_{i}^{r},\\
	\dot{\hat{x}}_{i k}=&g \hat{x}_{i(k+1)}, \quad k=2,3, \ldots, n_{i}-2,\\
	\dot{\hat{x}}_{i\left(n_{i}-1\right)} 
	=&\tfrac{\vartheta_{i}}{g^{n_{i}-2}}-g \gamma_{i\left(n_{i}-1\right)} \hat{x}_{i\left(n_{i}-1\right)}-\cdots-g \gamma_{i 1} \hat{x}_{i 1},\\
	\dot{\vartheta}_{i}=&\hat{f}_{i 1}\left(\bar{z}_{i}, \hat{x}_{i a}, \vartheta_{i}, y_{i}^{r}, g, v, w\right)+b_{i}(w) \Gamma_{i} T_{i}^{-1} \tilde{\eta}_{i}\\
	&+\Gamma_{i} T_{i}^{-1} N_{i} \vartheta_{i}+b_{i}(w) \bar{u}_{i}+\varepsilon_{i}\left(y_{i}^{r}, \dot{y}_{i}^{r}, s^{\star}, v, w\right),
\end{align*}
where
\begin{small}
	\begin{align*}
		&\hat{f}_{i 1}\left(\bar{z}_{i}, \hat{x}_{i a}, \vartheta_{i}, y_{i}^{r}, g, v, w\right)=\bar{f}_{i 1}\Big(\bar{z}_{i}, \hat{x}_{i 1}, g \hat{x}_{i 2}, \ldots, g^{n_{i}-2} \hat{x}_{i\left(n_{i}-1\right)},\\ &~~~~~~~\vartheta_{i}-g^{n_{i}-1}\left(\gamma_{i\left(n_{i}-1\right)} \hat{x}_{i\left(n_{i}-1\right)}+\cdots+\gamma_{i 1} \hat{x}_{i 1}\right), y_{i}^{r}, v, w\Big)\\
		&~~~~~~~+g\gamma_{i\left(n_{i}-1\right)}\Big(\vartheta_{i}-g^{n_{i}-1}\left(\gamma_{i\left(n_{i}-1\right)} \hat{x}_{i\left(n_{i}-1\right)}+\cdots +\gamma_{i 1} \hat{x}_{i 1}\right)\Big)\\
		&~~~~~~~+g^{n_{i}}\left(\gamma_{i\left(n_{i}-2\right)} \hat{x}_{i\left(n_{i}-1\right)}+\cdots+\gamma_{i 2} \hat{x}_{i 3}+\gamma_{i 1} \hat{x}_{i 2}\right),\\[1.5mm]
		&\varepsilon_{i}\left(y_{i}^{r}, \dot{y}_{i}^{r}, s^{\star}, v, w\right)=\bar{f}_{i 2}\left(y_{i}^{r}, s^{\star}, v, w\right)-g^{\left(n_{i}-1\right)} \gamma_{i 1} \dot{y}_{i}^{r}.
	\end{align*}
\end{small}

Let $\hat{x}_{i a}=\operatorname{col}\left(\hat{x}_{i 1}, \hat{x}_{i 2}, \ldots \hat{x}_{i\left(n_{i}-1\right)}\right)$. Then the augmented error system (\ref{augmented_system}) can be rewritten as follows,
\begin{equation} \label{augmented_system_1}
	\begin{aligned}
		\dot{\bar{z}}_{i}= &\tilde{f}_{i 0}\left(\bar{z}_{i}, \hat{x}_{i a}, y_{i}^{r}, v, w\right)+\hat{f}_{i 0}\left(y_{i}^{r}, s^{\star}, v, w\right), \\
		\dot{\hat{x}}_{i a}= & g A_{ci} \hat{x}_{i a}+B_{ci}(g) \vartheta_{i}-E_{i}\dot{y}_{i}^{r},\\
		\dot{\tilde{\eta}}_{i}=&M_{i} \tilde{\eta}_{i}+\tilde{f}_{i 1}\left(\bar{z}_{i}, \hat{x}_{i a}, \vartheta_{i}, y_{i}^{r}, g, v, w\right)-b_{i}^{-1}(w) N_{i} \varepsilon_{i}, \\
		\dot{\vartheta}_{i}=&\tilde{f}_{i 2}\left(\bar{z}_{i}, \hat{x}_{i a}, \tilde{\eta}_{i}, \vartheta_{i}, y_{i}^{r}, g, v, w\right)+b_{i}(w) \bar{u}_{i}+\varepsilon_{i},
	\end{aligned}
\end{equation}
where
\begin{align*}
	&A_{ci}=\left[\begin{array}{c|c}
		\mathbf{0}_{n_{i}-2} & I_{n_{i}-2} \\
		\hline-\gamma_{i 1} & -\gamma_{i 2}, \ldots,-\gamma_{i \left(n_{i}-1\right)}
	\end{array}\right],\\ 
	&B_{ci}(g)=\left[\begin{array}{c}
		\mathbf{0}_{n_{i}-2} \\
		1 / g^{n_{i}-2}
	\end{array}\right], \quad E_{i}=\Big[\begin{array}{c}
	1 \\
	\mathbf{0}_{n_{i}-1}
	\end{array}\Big],\\
	&\tilde{f}_{i 0}\left(\bar{z}_{i}, \hat{x}_{i a}, y_{i}^{r}, v, w\right)= \bar{f}_{i 0}\left(\bar{z}_{i}, \bar{x}_{i 1}, y_{i}^{r}, v, w\right),\\
	&\tilde{f}_{i 1}\left(\bar{z}_{i}, \hat{x}_{i a}, \vartheta_{i}, y_{i}^{r}, g, v, w\right)= b_{i}^{-1}(w) M_{i} N_{i} \vartheta_{i}\\
	&~~~~~~~~~~~~~~~~~~~~~~~-b_{i}^{-1}(w) N_{i} \hat{f}_{i 1}\big(\bar{z}_{i}, \hat{x}_{i a}, \vartheta_{i}, y_{i}^{r}, g, v, w\big),\\
	&\tilde{f}_{i 2}\left(\bar{z}_{i}, \hat{x}_{i a}, \tilde{\eta}_{i}, \vartheta_{i}, y_{i}^{r}, g, v, w\right)=\hat{f}_{i 1}\left(\bar{z}_{i}, \hat{x}_{i a}, \vartheta_{i}, y_{i}^{r}, g, v, w\right)\\
	&~~~~~~~~~~~~~~~~~~~~~~~+b_{i}(w) \Gamma_{i} T_{i}^{-1} \tilde{\eta}_{i}+\Gamma_{i} T_{i}^{-1} N_{i} \vartheta_{i}.
\end{align*}

Now we are ready to design the decentralized output feedback controller $ \bar{u}_{i} $ for the augmented system (\ref{augmented_system_1}) under the following assumption.

\begin{assumption} \label{assumption_zero dynamics_1}
	For each $ i=1,2,\ldots,N $, there exists a continuously differentiable function $V_{i 0}:\mathbb{R}^{n_{z_{i}}} \rightarrow \mathbb{R}$ such that, for all $ y_{i}^{r}\in\mathbb{R} $ and $\operatorname{col}(v, w) \in \mathbb{V}\times \mathbb{W}$, along the trajectories of the zero dynamics $\dot{\bar{z}}_{i}=\bar{f}_{i 0}(\bar{z}_{i},0,y_{i}^{r},v,w) $, the following inequality is satisfied,
	\begin{align}\label{locally exponential stability}
		\frac{\partial V_{i 0}}{\partial \bar{z}_{i}} \bar{f}_{i 0}(\bar{z}_{i},0,y_{i}^{r},v,w) \leq -a_{i 0}\left\|\bar{z}_{i}\right\|^{2},
	\end{align}
	where $ a_{i 0}>0 $ is a known constant.	
\end{assumption}

\begin{remark}
	According to Theorem 4.10 in \cite{khalil2002nonlinear}, the inequality (\ref{locally exponential stability}) implies that the zero dynamics of agents are globally asymptotically stable as well as locally exponentially stable, which is less stringent than the assumption that the inverse dynamics of agents are ISS in \cite{tang2020optimal,li2020distributed}. In particular, the locally exponential stability is needed to drive all trajectories to the origin instead of its neighborhood \cite{isidori1999nonlinear}. By choosing sufficiently large control gains, it is shown by semi-global stability analysis that $ \operatorname{col}(\bar{z}_{i}, \bar{x}_{i 1}) $ can still be guaranteed to remain in a compact set under such a relaxed assumption.
\end{remark}

\subsubsection{High-Gain Observer Design}
Due to the uncertainties in agent dynamics (\ref{dynamics}), it is more challenging to estimate the agent states since no perfect knowledge of the nonlinear functions $ f_{i 1}, i=1,2,\ldots,N $ can be utilized for designing an observer. Fortunately, the high gain observer is robust to a certain level of uncertainties and can still work in this case. Specifically, a distributed high-gain observer is proposed for each agent as follows,
\begin{equation} \label{high-gain_observer}
	\left(\!\begin{array}{c}
		\dot{\tilde{x}}_{i 1} \\
		\dot{\tilde{x}}_{i 2} \\
		\vdots \\
		\dot{\tilde{x}}_{i\left(n_{i}-1\right)} \\
		\dot{\tilde{x}}_{i n_{i}}
	\end{array}\!\right)=\left(\begin{array}{c}
		\tilde{x}_{i 2} \\
		\tilde{x}_{i 3} \\
		\vdots \\
		\tilde{x}_{i n_{i}} \\
		0
	\end{array}\right)+\left(\!\begin{array}{c}
		h c_{i n_{i}} \\
		h^{2} c_{i\left(n_{i}-1\right)} \\
		\vdots \\
		h^{n_{i}-1} c_{i 2} \\
		h^{n_{i}} c_{i 1}
	\end{array}\!\right)\left(y_{i}-\tilde{x}_{i 1}\right),
\end{equation}
where $ \tilde{x}_{i k}, k=1,2,\ldots,n_{i} $ are the estimations of the agent states, $ c_{i k}, k=1,2,\ldots,n_{i} $ are cofficients such that the polynomials $ \lambda^{n_{i}}+c_{i n_{i}} \lambda^{n_{i}-1}+\cdots+c_{i 2} \lambda+c_{i 1}, i=1,2,\ldots,N $ are all Hurwitz, and $ h>0 $ is a sufficiently large constant to be determined later.

\subsubsection{Decentralized Output Feedback Stabilizer Design}

With the proposed distributed high-gain observer (\ref{high-gain_observer}), the decentralized output feedback stabilizer for the augmented system (\ref{augmented_system_1}) is developed as follows,
\begin{equation} \label{decentralized output feedback stabilizer}
	\bar{u}_{i}=-\beta_{\delta}\big(K \tilde{\vartheta}_{i}\big),
\end{equation}
where $ K $ and $ \delta$ are positive constants to be designed later, $\tilde{\vartheta}_{i}=\tilde{x}_{i n_{i}}+g \gamma_{i\left(n_{i}-1\right)} \tilde{x}_{i\left(n_{i}-1\right)}+\cdots+g^{n_{i}-2} \gamma_{i 2} \tilde{x}_{i 2}+g^{n_{i}-1} \gamma_{i 1}\left(\tilde{x}_{i 1}-y_{i}^{r}\right)$, with $ \tilde{x}_{i k},k=1,2,\ldots,n_{i} $ being generated by the high-gain observer (\ref{high-gain_observer}) and $ y_{i}^{r} $ being generated by the optimal coordinator (\ref{algorithm_r}).

\begin{remark}
	 The high-gain observer is exploited owing to its robustness in estimating the agent states in the presence of uncertainties. However, it may also exhibit an impulsive-like behavior known as the peaking phenomenon, which may even interact with nonlinearities leading to a finite escape time. Hence, the saturation function $ \beta_{\delta}(\cdot) $ is utilized in (\ref{decentralized output feedback stabilizer}) to aviod the peaking phenomenon and to ensure the control input in a compact set of interest.
\end{remark}

The solvability of Problem \ref{problem_2} is summarized as follows.

\begin{theorem} \label{theorem_2}
	Under Assumption \ref{assumption_zero dynamics_1}, given any arbitrarily large constant $\bar{R}>0$, there exist sufficiently large positive constants $g$, $ K $, $ \delta $ and $ h $ depending on $\bar{R}$ such that the equilibrium point of the closed-loop system consisting of (\ref{algorithm_r}), (\ref{augmented_system_1}), (\ref{high-gain_observer}) and (\ref{decentralized output feedback stabilizer}) is uniformly locally asymptotically stable with its region of attraction containing $\bar{Q}_{\bar{R}}^{n_{c}}$, where $n_{c}=\sum_{i=1}^{N}\left(n_{z_{i}}+2 n_{i}+s_{i}+2\right) $. In other words, the semi-global reference-tracking Problem \ref{problem_2} is solvable.
\end{theorem}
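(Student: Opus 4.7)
The plan is to combine three Lyapunov arguments in a singular-perturbation-style cascade. The high-gain observer~(\ref{high-gain_observer}) serves as the fast layer, the plant-plus-internal-model block $(\bar{z}_i,\hat{x}_{ia},\tilde{\eta}_i,\vartheta_i)$ from~(\ref{augmented_system_1}) as the intermediate layer, and the optimal coordinator~(\ref{algorithm_r}) as the slowest layer, which by Theorem~\ref{proposition1} enters the closed loop as an exponentially vanishing perturbation. The saturation $\beta_{\delta}$ in~(\ref{decentralized output feedback stabilizer}) shields the plant from the observer's peaking transient and is ultimately what buys semi-globality, after which Lemma~\ref{lemma_stabilization} returns the result for Problem~\ref{problem_2}.

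First I would introduce appropriately scaled observer errors $\tilde{e}_{ik}=h^{n_i-k}(x_{ik}-\tilde{x}_{ik})$, so that $\dot{\tilde{e}}_i=hA_{oi}\tilde{e}_i+B_i\Delta_i$ with $A_{oi}$ Hurwitz by the choice of the $c_{ik}$; the Lyapunov equation $P_{oi}A_{oi}+A_{oi}^{\mathrm{T}}P_{oi}=-I$ then gives $V_{oi}=\tilde{e}_i^{\mathrm{T}}P_{oi}\tilde{e}_i$ with $\dot{V}_{oi}\le-h\|\tilde{e}_i\|^2+c\|\tilde{e}_i\|\,|\Delta_i|$, where $\Delta_i$ collects the unmodelled right-hand side of $\dot{x}_{in_i}$, and because $|\bar{u}_i|\le\delta$ and the initial data lie in $\bar{Q}_{\bar{R}}^{n_c}$, $|\Delta_i|$ is bounded by a constant $M(\bar{R},\delta)$ as long as $(\bar{z}_i,\hat{x}_{ia},\vartheta_i)$ remains in a fixed compact set. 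Next, on the slow block I would use the composite $V_i^{\mathrm{s}}=\mu_0 V_{i0}(\bar{z}_i)+\mu_a\hat{x}_{ia}^{\mathrm{T}}P_{ia}\hat{x}_{ia}+\mu_\eta\tilde{\eta}_i^{\mathrm{T}}P_{i\eta}\tilde{\eta}_i+\tfrac12\vartheta_i^2$: Assumption~\ref{assumption_zero dynamics_1} controls $V_{i0}$; the Hurwitz matrices $A_{ci}$ and $M_i$ give $\dot{V}_{ia}\le-g\|\hat{x}_{ia}\|^2+(\cdot)|\vartheta_i|$ and $\dot{V}_{i\eta}\le-\|\tilde{\eta}_i\|^2+(\cdot)$; and substituting $\bar{u}_i=-\beta_\delta(K\tilde{\vartheta}_i)$ with $\tilde{\vartheta}_i=\vartheta_i+$(observer-error and coordinator terms) and arguing that $K\tilde{\vartheta}_i$ lies inside the unsaturated region once the observer error has decayed yields $\dot{V}_{i\vartheta}\le-Kb_i(w)\vartheta_i^2+(\cdot)$. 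Summing $V=\sum_i(V_i^{\mathrm{s}}+\mu_o V_{oi})$ with the singular-perturbation ordering $1\gg\mu_0\gg\mu_\eta\gg\mu_a\gg\mu_o$, and tuning $g$, then $K$, then $\delta$, then $h$ as explicit functions of $\bar{R}$, I would obtain $\dot{V}\le-\kappa V+\rho(t)$ on a sublevel set $\bar{\varOmega}_c(V)\supset\bar{Q}_{\bar{R}}^{n_c}$, where $\rho(t)$ is driven by $|y_i^r-s^\star|$, $|\dot{y}_i^r|$ and the bounded exogenous $v(t)$; Theorem~\ref{proposition1} then makes $\rho(t)\to0$ exponentially, so this sublevel set is positively invariant, $V\to0$, and hence $\bar{y}_i=\hat{x}_{i1}\to0$.

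The hard part will be controlling the $g$-dependence of $\hat{f}_{i1}$: factors up to $g^{n_i-1}$ appear inside it multiplying $\hat{x}_{ia}$ and $\vartheta_i$, so the $\vartheta_i$-equation carries cross terms of order $g^{n_i-1}$, and expressing $\tilde{\vartheta}_i-\vartheta_i$ in the rescaled observer errors reintroduces terms of the same order in each $\tilde{e}_{ik}$. Dominating these by the linear damping $-g\|\hat{x}_{ia}\|^2$, $-K\vartheta_i^2$ and $-h\|\tilde{e}_i\|^2$ forces the strict ordering $h\gg K\gg g^{2(n_i-1)}\gg L(\bar{R})$, where $L(\bar{R})$ is a Lipschitz constant of the nonlinearities on the chosen sublevel set. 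Equally delicate is the peaking phase: observer initial errors can be as large as $O(h^{n_i-1})$, so without $\beta_\delta$ the unsaturated gain $K\tilde{\vartheta}_i$ would eject the plant from the region where the local Lipschitz bounds hold before the fast observer has had time to converge. Choosing $\delta$ just large enough to exceed the nominal value of $K\tilde{\vartheta}_i$ on the target sublevel set but independent of $h$, and letting $\beta_\delta$ cap the peaking transient, is the mechanism that upgrades the merely local Assumption~\ref{assumption_zero dynamics_1} into the semi-global conclusion of the theorem.
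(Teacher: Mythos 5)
Your overall architecture is the same as the paper's: scaled observer errors driven by a Hurwitz matrix (the paper's $e_i$, $E_{i3}$ in Lemma~\ref{Lemma_e}), a composite Lyapunov function for the block $(\bar{z}_i,\hat{x}_{ia},\tilde{\eta}_i,\vartheta_i)$ built from Assumption~\ref{assumption_zero dynamics_1} and the Hurwitz $A_{ci}$, $M_i$ (Lemma~\ref{Lemma_X}), the saturation level $\delta$ fixed independently of $h$ so that during the peaking transient the drift of the slow Lyapunov function is bounded over a short horizon while the observer error collapses, and then a second phase in which the saturation is inactive and the composite derivative is negative definite. Your gain ordering ($g$, then $K$, then $\delta$, then $h$) also matches the paper's.

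The genuine gap is your treatment of the optimal coordinator. You pull $y_i^r$, $\dot{y}_i^r$ (hence $\hat{f}_{i0}$, $\bar{f}_{i2}$, $\varepsilon_i$) out as an exogenous perturbation $\rho(t)$ and conclude from $\dot{V}\le-\kappa V+\rho(t)$ together with $\rho(t)\to0$ that the sublevel set $\bar{\varOmega}_c(V)\supset\bar{Q}_{\bar{R}}^{n_c}$ is positively invariant. That inference does not hold as stated: while $\rho$ is still large, $V$ can rise above $c$, and the transient amplitude of $\rho$ is not small — it scales with $\bar{R}$ through the coordinator's initial error and, via $\varepsilon_i=\bar{f}_{i2}-g^{n_i-1}\gamma_{i1}\dot{y}_i^r$, also with the gain $g$ you have already fixed. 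Since every Lipschitz/growth constant $L(\bar{R})$ you invoke is valid only on the compact set you are trying to confine the trajectory to, you would have to inflate the level set to absorb $\int_0^t e^{-\kappa(t-s)}\rho(s)\,ds$ \emph{before} choosing $g,K,\delta,h$, and then re-verify all constants on the inflated set — a circularity your sketch does not resolve. The paper avoids this entirely by making the coordinator part of the state: $\tilde{y}_i^r=y_i^r-\bar{y}_i^r$ and $\tilde{\zeta}_i$ are components of $X_i$, a converse-Lyapunov function $\tilde{V}_i$ for the exponentially convergent coordinator is added to $W_X$ with a weight $\mu$ chosen large enough (using that $\dot{y}_i^r$ is Lipschitz in $(\tilde{y}_i^r,\tilde{\zeta}_i)$) to dominate the coordinator-induced cross terms, so that $\dot{W}_X\le-a\|X\|^2$ holds on $\bar{\varOmega}_{c+\epsilon}(W_X)$ with all gains fixed once, and this level set is genuinely invariant under the saturated control. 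Either adopt that embedding, or supply the missing bookkeeping for the cascade/ISS route; as written, the invariance claim — which is precisely what upgrades the local estimates to the semi-global conclusion — is unsupported.
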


Before giving the proof of Theorem \ref{theorem_2}, some coordinate transformations are introduced to redescribe the resulting closed-loop system. Let $ \tilde{y}_{i}^{r}= y_{i}^{r}-\bar{y}_{i}^{r} $ and $ \tilde{\zeta}_{i}= \zeta_{i}-\bar{\zeta}_{i} $, where $ \operatorname{col}(y_{i}^{r}, \zeta_{i}) $ is generated by (\ref{algorithm_r}) with $ \operatorname{col}(\bar{y}_{i}^{r}, \bar{\zeta}_{i}) $ being its corresponding equilibrium point. Define $X_{i}=\operatorname{col}\left(\bar{z}_{i}, \hat{x}_{i a}, \tilde{\eta}_{i}, \vartheta_{i}, \tilde{y}_{i}^{r}, \tilde{\zeta}_{i}\right)$. Let $e_{i k}=h^{n_{i}-k}\left(x_{i k}-\tilde{x}_{i k}\right), k=1,2, \ldots, n_{i}$ and $e_{i}=\operatorname{col}\left(e_{i 1}, e_{i 2}, \ldots, e_{i n_{i}}\right)$. It follows that $\tilde{x}_{i}=x_{i}-H_{i}^{-1} e_{i}$, where $H_{i}=\operatorname{diag}\left(h^{n_{i}-1}, h^{n_{i}-2}, \ldots, h, 1\right)$. One then has $ \tilde{\vartheta}_{i}= \vartheta_{i}-E_{i 2}(h) e_{i} = E_{i 1}X_{i} - E_{i 2}(h) e_{i} $, where
\begin{align*}
	E_{i 1}=& \left[\mathbf{0}_{1 \times (n_{z_{i}}+s_{i}+n_{i}-1)},1,\mathbf{0}_{1 \times 2}\right],\\
	E_{i 2}(h)=&\left[\frac{g^{n_{i}-1} \gamma_{i 1}}{h^{n_{i}-1}}, \frac{g^{n_{i}-2} \gamma_{i 2}}{h^{n_{i}-2}}, \ldots, \frac{g \gamma_{i\left(n_{i}-1\right)}}{h}, 1\right].
\end{align*}
Thus, it can be obtained from (\ref{high-gain_observer}) and (\ref{decentralized output feedback stabilizer}) that
\begin{align}
	\bar{u}_{i}=&-\beta_{\delta}\big( KE_{i 1}X_{i} - KE_{i 2}(h) e_{i}\big), \label{bar_u_2}\\
 	\dot{e}_{i}=& h E_{i 3} e_{i}+E_{i 4} \dot{X}_{i},	\label{e_{i}_1}
\end{align}
where 
\begin{align} \label{E_i3}
	E_{i 3}=\left[\begin{array}{ccccc}
		-c_{i n_{i}} & 1 & 0 & \cdots & 0 \\
		-c_{i\left(n_{i}-1\right)} & 0 & 1 & \cdots & 0 \\
		\vdots & \vdots & \vdots & \ddots & \vdots \\
		-c_{i 2} & 0 & 0 & \cdots & 1 \\
		-c_{i 1} & 0 & 0 & 0 & 0
	\end{array}\right],
\end{align}
and $ E_{i 4} $ is given in (\ref{E_{i 4}}).
Thus, the closed-loop system composed of (\ref{algorithm_r}), (\ref{augmented_system_1}), (\ref{high-gain_observer}) and (\ref{decentralized output feedback stabilizer}) can be described by the closed-loop system composed of (\ref{algorithm_r}), (\ref{augmented_system_1}), (\ref{bar_u_2}) and (\ref{e_{i}_1}). Let $e=\operatorname{col}\left(e_{1},e_{2}, \ldots, e_{N}\right)$ and $X=\operatorname{col}\left(X_{1}, X_{2}, \ldots, X_{N}\right)$. Then, we can obtain the following two crucial lemmas, whose proofs are given in \hyperlink{appendices}{Appendices}.

\newcounter{mytempeqncnt}
\begin{figure*}[t]
	\vspace*{1pt}
	\hrulefill
	\hrule
	\normalsize
	\setcounter{mytempeqncnt}{\value{equation}}
	\begin{align} 
		E_{i 4}=&\left[\begin{array}{ccccc}
			\mathbf{0}_{\left(n_{i}-1\right) \times n_{z_{i}}} & \mathbf{0}_{\left(n_{i}-1\right) \times\left(n_{i}-1\right)} & \mathbf{0}_{\left(n_{i}-1\right) \times s_{i}} & \mathbf{0}_{\left(n_{i}-1\right) \times 1} & \mathbf{0}_{\left(n_{i}-1\right) \times 2} \\
			\mathbf{0}_{1 \times n_{z_{i}}} & -g^{n_{i}-1}\left[\gamma_{i 1}, \gamma_{i 2}, \ldots, \gamma_{i\left(n_{i}-1\right)}\right] & \mathbf{0}_{1 \times s_{i}} & 1 & \mathbf{0}_{1 \times 2} 
		\end{array}\right] \label{E_{i 4}}
	\end{align}
	\hrule
	\hrulefill
\end{figure*}

\begin{lemma} \label{Lemma_X}
Under Assumption \ref{assumption_zero dynamics_1}, given any arbitrarily large $\bar{R}>0$, there exist positive constants $g$, $ K $ and $c$ that depend on $\bar{R}$, and a continuously differentiable positive definite function $W_{X}(\cdot)$, such that the following equation is satisfied,
\begin{equation} \label{varOmega_c_X}
	\bar{Q}_{\bar{R}}^{\bar{n}_{c}} \subseteq \bar{\varOmega}_{c}\left(W_{X}(X)\right),
\end{equation}
where $ \bar{n}_{c}= \sum_{i=1}^{N}\left(n_{z_{i}}+n_{i}+s_{i}+2\right)$. Moreover, given any $ \epsilon>0 $, for all $X \in \bar{\varOmega}_{c+\epsilon}\big(W_{X}(X)\big)$ and all  $\operatorname{col}(v, w) \in \mathbb{V} \times \mathbb{W}$, the derivative of $W_{X}(\cdot)$ along the trajectories of the closed-loop system composed of (\ref{algorithm_r}), (\ref{augmented_system_1}) and the following decentralized state feedback controller,
\begin{equation} \label{state feedback}
	\bar{u}_{i}=-K \vartheta_{i},
\end{equation}
satisfies 
\begin{equation} \label{dot_W_X}
	\dot{W}_{X}(X)\big|_{(\ref{algorithm_r})+(\ref{augmented_system_1})+(\ref{state feedback})} \leq -a\|X\|^{2},
\end{equation}
where $ a>0 $ is a constant to be determined later.
\end{lemma}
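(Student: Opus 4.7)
The plan is to build a weighted-sum Lyapunov function $W_X$ out of block Lyapunov functions for each subsystem in the cascade $(\ref{algorithm_r})+(\ref{augmented_system_1})+(\ref{state feedback})$, and then to pick the high-gain parameters so the negative definite term of each block dominates every cross-coupling on a suitably large sublevel set. First I would choose the block pieces: (i) for the $\bar z_i$-block take $V_{i0}$ from Assumption \ref{assumption_zero dynamics_1}, which along the zero-input dynamics already gives decay $-a_{i0}\|\bar z_i\|^2$; (ii) for the $\hat x_{ia}$-block take $V_{ia}=\hat x_{ia}^{\mathrm T}P_i\hat x_{ia}$ with $P_iA_{ci}+A_{ci}^{\mathrm T}P_i=-I$, so $\dot V_{ia}\le -g\|\hat x_{ia}\|^{2}$ plus cross terms in $\vartheta_i/g^{n_i-2}$ and $\dot y_i^r$; (iii) for the internal-model error take $V_{i\eta}=\tilde\eta_i^{\mathrm T}Q_i\tilde\eta_i$ with $Q_iM_i+M_i^{\mathrm T}Q_i=-I$; (iv) for the relative-degree-one coordinate take $V_{i\vartheta}=\tfrac12\vartheta_i^{2}$, so that with $\bar u_i=-K\vartheta_i$ one gets $\dot V_{i\vartheta}\le -Kb_i(w)\vartheta_i^{2}+\vartheta_i[\tilde f_{i2}+\varepsilon_i]$; (v) for the coordinator error $(\tilde y_i^r,\tilde\zeta_i)$ use a quadratic $V_r$ whose decay $-\lambda_r\|(\tilde y^r,\tilde\zeta)\|^{2}$ is provided by Theorem \ref{proposition1} when $\alpha_1,\alpha_2$ are large enough. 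Set $W_X=\sum_i(\mu_{i0}V_{i0}+\mu_{ia}V_{ia}+\mu_{i\eta}V_{i\eta}+\mu_{i\vartheta}V_{i\vartheta})+\mu_r V_r$.

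Second, I would bound the cross terms on $\bar\varOmega_{c+\epsilon}(W_X)$. On any such compact region, the smooth maps $\bar f_{i0}-\bar f_{i0}(\bar z_i,0,y_i^r,v,w)$, $\hat f_{i0}$, $\tilde f_{i1}$, $\tilde f_{i2}$ are Lipschitz in their state arguments, with constants that may be polynomial in $g$ but are uniform in $\operatorname{col}(v,w)\in\mathbb V\times\mathbb W$ by Remark \ref{remark_d}. Using Young's inequality to split each cross product and absorbing the residuals into the dominant negative terms gives $\dot W_X\le -a\|X\|^{2}$, provided the gains are tuned in the order: (a) pick the weights $\mu_{i0}$ large enough that $-\mu_{i0}a_{i0}\|\bar z_i\|^{2}$ absorbs the $\hat x_{i1}$-dependent piece arising when $\dot{\bar z}_i$ is replaced by its true right-hand side; (b) pick $g$ large so that $\mu_{ia}g\|\hat x_{ia}\|^{2}$ dominates the $\hat x_{ia}$ cross-terms entering through $\bar z_i$, $\vartheta_i$ and $\dot y_i^r$; (c) pick $K$ large (after $g$ is fixed) so that $\mu_{i\vartheta}Kb_i(w)\vartheta_i^{2}$ absorbs the $g$-inflated cross-terms produced by $\tilde f_{i2}$; (d) pick $\mu_r$ large so that $V_r$ swallows the residuals involving $\dot y_i^r$ and $\varepsilon_i$, both of which are bounded by $\|(\tilde y^r,\tilde\zeta)\|$ times a constant.

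Third, I would close the semi-global argument. By Remark \ref{remark_d}, $v(t)\in\mathbb V$ for all $t\ge 0$; by the smoothness of $z_i^{\star}(s,v,w)$ and the equilibrium of $(\ref{algorithm_r})$, the coordinate change $x_c\mapsto X$ is continuous on compact sets, so given $\bar R>0$ there exists $\bar R'>0$ with $\{X:\operatorname{col}(\bar x_c,v,w)\text{ corresponds to }\bar x_c\in\bar Q_{\bar R}^{n_c}\}\subseteq\bar Q_{\bar R'}^{\bar n_c}$. Since $W_X$ is continuous and positive definite, there exists $c>0$ with $\bar Q_{\bar R}^{\bar n_c}\subseteq\bar\varOmega_{c}(W_X)$, establishing (\ref{varOmega_c_X}). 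The slightly larger sublevel set $\bar\varOmega_{c+\epsilon}(W_X)$ remains compact, and on it the Lipschitz bounds underlying the Step-2 estimate continue to hold, so (\ref{dot_W_X}) follows.

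The main obstacle is the intertwined selection of $g$ and $K$. The parameter $g$ must be chosen large to stabilise the $\hat x_{ia}$ block, but the same $g$ appears with positive powers $g^{n_i}$ inside $\hat f_{i1}$, hence inside the $\vartheta_i$ dynamics through $\tilde f_{i2}$. This forces a sequential choice: compact sets and weights first, then $g$, then $K$ polynomially larger than $g$ so that $Kb_i(w)$ beats the worst-case $g$-blow-up in the $\vartheta_i$ cross terms. A secondary difficulty is that Assumption \ref{assumption_zero dynamics_1} only yields local exponential stability of the zero dynamics (the bound $-a_{i0}\|\bar z_i\|^{2}$ is on the right-hand side, but $V_{i0}$ need not admit a quadratic upper bound globally), so the $\bar z_i$-trajectory must be confined to a pre-fixed neighbourhood of the origin where $V_{i0}$ and $\|\bar z_i\|^{2}$ are comparable; this confinement is exactly what the sublevel set $\bar\varOmega_{c+\epsilon}(W_X)$ supplies once the other weights and gains have been set.
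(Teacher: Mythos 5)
Your proposal is correct and matches the paper's own argument in all essentials: the paper likewise forms a weighted sum of $V_{i0}$ from Assumption \ref{assumption_zero dynamics_1}, quadratic Lyapunov functions for the $\hat{x}_{ia}$ and $\tilde{\eta}_{i}$ blocks, $\tfrac12\vartheta_i^2$, and a converse-Lyapunov function for the coordinator error, bounds the cross terms on the compact sublevel set by local Lipschitz estimates uniform over $\mathbb{V}\times\mathbb{W}$, and then fixes $c$, $g$, $K$ and the coordinator weight sequentially, with $K$ chosen after $g$ precisely to beat the $g$-dependent growth in $\tilde f_{i2}$, and with the sublevel set confining $\bar z_i$ exactly as you describe. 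The only cosmetic difference is that the paper absorbs the $g$-blow-up by dividing the weights on the $\tilde\eta_i$ and $\vartheta_i$ blocks by $g$-dependent factors $\varsigma_{i8}(g)$ rather than by stating that $K$ is taken polynomially larger than $g$, which amounts to the same sequential gain selection.
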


\begin{lemma} \label{Lemma_e}
There exists a continuously differentiable positive definite function $W_{e}(e)$ such that, for all $\operatorname{col}(v, w) \in$ $\mathbb{V} \times \mathbb{W}$, the derivative of $W_{e}(e)$ along the trajectories of (\ref{e_{i}_1}) satisfies 
\begin{equation} \label{dot_W_e}
	\dot{W}_{e}(e)\big|_{(\ref{e_{i}_1})} \leq-\frac{h}{2}\|e\|^{2}+\frac{\varrho}{h}\|\dot{X}\|^{2},
\end{equation}
where $\varrho>0$ is a constant independent of $h$.	
\end{lemma}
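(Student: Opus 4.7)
The plan is to build $W_e$ as a sum of quadratic Lyapunov functions, one per agent, based on the fact that each $E_{i3}$ is Hurwitz by design (its characteristic polynomial $\lambda^{n_i}+c_{in_i}\lambda^{n_i-1}+\cdots+c_{i1}$ was chosen Hurwitz when the observer coefficients $c_{ik}$ were specified). Thus for every $i$ there is a unique symmetric positive definite $P_i$ solving the Lyapunov equation
\begin{equation*}
P_i E_{i3}+E_{i3}^{\mathrm{T}}P_i=-I_{n_i}.
\end{equation*}
I would then set $W_e(e)=\sum_{i=1}^{N} e_i^{\mathrm{T}}P_i e_i$, which is clearly continuously differentiable and positive definite.

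Next I would differentiate $W_e$ along (\ref{e_{i}_1}). For each $i$,
\begin{equation*}
\frac{\mathrm{d}}{\mathrm{d}t}\bigl(e_i^{\mathrm{T}}P_i e_i\bigr)
= h\,e_i^{\mathrm{T}}\!\bigl(P_i E_{i3}+E_{i3}^{\mathrm{T}}P_i\bigr)e_i+2e_i^{\mathrm{T}}P_i E_{i4}\dot{X}_i
= -h\|e_i\|^2+2e_i^{\mathrm{T}}P_i E_{i4}\dot{X}_i.
\end{equation*}
The key step is to split the cross term by a weighted Young's inequality so that half of the $-h\|e_i\|^2$ absorbs the cross term and the remaining penalty on $\|\dot{X}_i\|^2$ carries a $1/h$ factor:
\begin{equation*}
2e_i^{\mathrm{T}}P_i E_{i4}\dot{X}_i\leq \frac{h}{2}\|e_i\|^2+\frac{2}{h}\|P_i E_{i4}\|^2\,\|\dot{X}_i\|^2.
\end{equation*}
Substituting this back yields $\dot{V}_i\leq -\tfrac{h}{2}\|e_i\|^2+\tfrac{2}{h}\|P_i E_{i4}\|^2\|\dot{X}_i\|^2$.

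Summing over $i=1,\ldots,N$ gives
\begin{equation*}
\dot{W}_e(e)\bigl|_{(\ref{e_{i}_1})}\leq -\frac{h}{2}\|e\|^2+\frac{\varrho}{h}\|\dot{X}\|^2,\qquad \varrho\triangleq 2\max_{1\leq i\leq N}\|P_i E_{i4}\|^2,
\end{equation*}
which is the claim. Note that $P_i$ and $E_{i4}$ (cf. (\ref{E_{i 4}})) depend only on the design constants $\gamma_{ik},c_{ik},g$ but not on the observer gain $h$, so $\varrho$ is independent of $h$ as required.

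There is essentially no obstacle in this proof, as the error dynamics (\ref{e_{i}_1}) have a scalar-gain Hurwitz structure that is tailor-made for a quadratic Lyapunov argument; the only subtlety is choosing the Young's inequality weight ($h/2$ vs.\ $2/h$) so that the sign-definite term dominates exactly and the residual $\dot{X}$-term acquires the factor $1/h$. This factor is precisely what allows, in the subsequent main theorem, the high-gain $h$ to be chosen large enough to dominate the coupling between the observer error $e$ and the plant state $X$ established in Lemma \ref{Lemma_X}.
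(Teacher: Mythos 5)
Your proof is correct and follows essentially the same route as the paper's: a blockwise quadratic Lyapunov function $\sum_i e_i^{\mathrm{T}}P_{i}e_i$ built from the Lyapunov equation for the Hurwitz matrices $E_{i3}$, a Young's-inequality split of the cross term with weight $h/2$, and $\varrho=\max_i 2\|P_iE_{i4}\|^2$, exactly as in the paper's Appendix B. No gaps.
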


Based on Lemmas \ref{Lemma_X} and \ref{Lemma_e}, the proof of Theorem \ref{theorem_2} is presented as follows.

\begin{proof}[Proof of Theorem \ref{theorem_2}]
	For any arbitrarily large $\bar{R}>0$, $g$ and $K$ are chosen to be the same as those in Lemma \ref{Lemma_X}. For any arbitrarily small $\epsilon>0$, the bound $ \delta $ of the saturation function is chosen as 
	\begin{equation} \label{saturation_S}
		\delta=K\bigg(\!\max _{X \in \bar{\varOmega}_{c+\epsilon}\left(W_{X}(X)\right) \atop i=1, \ldots, N}\!\!\left\{{E_{i 1}} X_{i}\right\}+\frac{\epsilon}{2} \max _{i=1, \ldots, N}\left\{\left\|E_{i 2}(1)\right\|\right\}\!\bigg) .
	\end{equation}
	Now, we are ready to prove that the trajectory of $\operatorname{col}(X(t), e(t))$ starting from any $ \operatorname{col}(X(0), e(0))\in\bar{Q}_{\bar{R}}^{n_{c}}$ will asymptotically converge to the origin by choosing sufficiently large $h$. To this end, consider the positive definite Lyapunov function $W(X, e)=W_{X}(X)+W_{e}(e)$, where $W_{X}(\cdot)$ and $W_{e}(\cdot)$ are defined to be the same as those in Lemmas \ref{Lemma_X} and \ref{Lemma_e}, respectively. The rest of the proof can be accomplished in the following two steps.
	
	\vspace{1mm}
	\textbf{Step 1.} Show that the trajectory of $\operatorname{col}(X(t), e(t))$ enters the compact set $\bar{\varOmega}_{c+\epsilon}(W(X, e))$ in a finite time $T$. By substituting the saturated output feedback controller (\ref{bar_u_2}) into (\ref{augmented_system_1}), one obtains that
	\begin{equation} \label{vartheta_i}
		\dot{\vartheta}_{i}=\tilde{f}_{i 2} -b_{i}(w)K\vartheta_{i} +\varepsilon_{i} + b_{i}(w)(\bar{u}_{i}+K\vartheta_{i}).
	\end{equation} 
	Note that $ 4 \varsigma_{i 6} \varsigma_{i 7} \varsigma_{i 8}(g)\geq 1 $, where $ \varsigma_{i 6} $, $ \varsigma_{i 7} $ and $ \varsigma_{i 8}(g) $ are chosen as the same as those in the proof of Lemma \ref{Lemma_X}. Then the derivative of $ V_{i 3}(\vartheta_{i})/(4 \varsigma_{i 6} \varsigma_{i 7} \varsigma_{i 8}(g)) $ along the trajectories of (\ref{vartheta_i}) can be described as follows,
	\begin{equation*}
		\tfrac{\dot{V}_{i 3}(\vartheta_{i})}{4 \varsigma_{i 6} \varsigma_{i 7} \varsigma_{i 8}(g)} \!\leq \tfrac{\vartheta_{i}(\tilde{f}_{i 2} -b_{i}(w)K\vartheta_{i} +\varepsilon_{i})}{4 \varsigma_{i 6} \varsigma_{i 7} \varsigma_{i 8}(g)}+\big|b_{i}(w)\vartheta_{i}(\bar{u}_{i}+K\vartheta_{i})\big|.
	\end{equation*}

	Reconsider $ W_{X}(X) $ in Lemma \ref{Lemma_X}. Its derivative along the trajectories of the closed-loop system composed of (\ref{algorithm_r}), (\ref{augmented_system_1}) and (\ref{bar_u_2}) can be described as follows,
	
	\begin{align} \label{dot_W_X_1}
		\dot{W}_{X}(X)|_{(\ref{algorithm_r})+(\ref{augmented_system_1})+(\ref{bar_u_2})} \leq& \dot{W}_{X}(X)|_{(\ref{algorithm_r})+(\ref{augmented_system_1})+(\ref{state feedback})}\notag\\
		& + \big|\sum_{i=1}^{N}b_{i}(w)\vartheta_{i}(\bar{u}_{i}+K\vartheta_{i})\big|.
	\end{align}

	On the one hand, it can be obtained from (\ref{bar_u_2}) that $\bar{u}_{i}$ is upper bounded by $\delta$, which is independent of $h$. Consequently, for all $X \in \bar{\varOmega}_{c+\epsilon}\left(W_{X}(X)\right)$ and all $\operatorname{col}(v, w) \in \mathbb{V} \times \mathbb{W}$, one has $\big|\sum_{i=1}^{N}b_{i}(w)\vartheta_{i}(\bar{u}_{i}+K\vartheta_{i})\big| \leq \delta_{1}$, where $\delta_{1}>0$ is a constant independent of $h$. By using (\ref{dot_W_X}), one further deduces that
	\begin{equation} \label{dot_W_X_u}
		\dot{W}_{X}(X)|_{(\ref{algorithm_r})+(\ref{augmented_system_1})+(\ref{bar_u_2})} \leq -a\|X\|^{2}+\delta_{1} \leq \delta_{1}.
	\end{equation}
	Meanwhile, it follows from (\ref{varOmega_c_X}) that $W_{X}(X(0)) \in \bar{\varOmega}_{c}\left(W_{X}(X)\right) \subseteq \bar{\varOmega}_{c+\epsilon}\left(W_{X}(X)\right)$. Thus, by using (\ref{dot_W_X_u}) and letting $T=\frac{\epsilon}{2 \delta_{1}}$, for all $t\in[0, T]$, we have
	\begin{equation} \label{W_X_t}
		W_{X}(X(t)) \leq W_{X}(X(0))+\delta_{1} t \leq c+\frac{\epsilon}{2}.
	\end{equation}
	In other words, it is shown in (\ref{W_X_t}) that $\|X(t)\|$ is upper bounded for all $t \in[0, T]$, with its bound independent of $h$. 
	
	On the other hand, it can be shown from the closed-loop system composed of (\ref{algorithm_r}), (\ref{augmented_system_1}) and (\ref{bar_u_2}) that $\varrho\|\dot{X}\|^{2} \leq \delta_{2}$, where $\delta_{2}>0$ is a constant independent of $h$. Thus, by referring to (\ref{dot_W_e}), one has $\dot{W}_{e}(e)\big|_{(\ref{e_{i}_1})} \leq -\frac{h}{2 \underline{\lambda}} W_{e}+\frac{\delta_{2}}{h}$, where $\underline{\lambda} \triangleq \min\limits_{i=1,2, \ldots, N}\{\lambda_{\min }\left(P_{i 3}\right), 1\} $. By the comparison lemma \cite{khalil2002nonlinear}, it follows that $W_{e}(e(t)) \leq \exp\{-\frac{h}{2 \underline{\lambda}}{t}\} W_{e}(e(0))+\frac{2 \underline{\lambda} \delta_{2}}{h^{2}}$, for all $t\in[0, T]$. Let $ h_{1}=\sqrt{\frac{8\underline{\lambda}\delta_{2}}{\epsilon}} $. Then, for all $t\in[0, T]$, choosing $ h\geq h_{1} $ leads to
	\begin{equation} \label{W_e_t}
		W_{e}(e(t)) \leq \exp\{-\tfrac{h}{2 \underline{\lambda}}{t}\} W_{e}(e(0))+ \frac{\epsilon}{4}.
	\end{equation}
	By recalling $e_{i j}=h^{n_{i}-j}\left(x_{i j}-\tilde{x}_{i j}\right)$, it can be proved that $W_{e}(e(0))$ is bounded by a polynomial of $h$. One then has $\lim\limits_{h \rightarrow+\infty} \exp\{-\frac{h}{2 \underline{\lambda}} T\} W_{e}(e(0))=0$. Thus, there exists a positive constant $h_{2}$ such that, for all $h>h_{2}$, $ \exp\{-\frac{h}{2 \underline{\lambda}} T\} W_{e}(e(0))<\frac{\epsilon}{4}$. By using (\ref{W_X_t}) and (\ref{W_e_t}), one can obtain that $W(X(T), e(T))<c+\epsilon$. In other words, the trajectory of
	$\operatorname{col}(X(t), e(t))$ enters the compact set $\bar{\varOmega}_{c+\epsilon}(W(X, e))$ within the time $T$.
	
	\textbf{Step 2.} Show that the trajectory of $\operatorname{col}(X(t), e(t))$ remains in the compact set $\bar{\varOmega}_{c+\epsilon}\left(W_{X}(X)\right) \times \bar{\varOmega}_{\epsilon/2}\left(W_{e}(e)\right)$, and asymptotically converges to the origin as time goes to infinity. It can be observered from (\ref{W_e_t}) that $ \exp\{-\tfrac{h}{2 \underline{\lambda}}{t}\} W_{e}(e(0)) $ is decreasing as time goes to infinity. Thus, by choosing any $h \geq \max \left\{h_{1}, h_{2}\right\}$, we have $W_{e}(e(t))<\epsilon/2$ for all $t \geq T$, that is, $e(t)$ remains in the compact set $\bar{\varOmega}_{\epsilon/2}\left(W_{e}(e)\right)$ for all $t \geq T$. Note that $\|E_{i 2}(h)\| \leq\|E_{i 2}(1)\|$ for all $h \geq 1$. It then follows from (\ref{saturation_S}) that $|\bar{u}_{i}| \leq \delta $, which implies that the saturation function will not trigger. As a consequence, the decentralized output feedback controller and the function in the resulting closed-loop system are smooth. Thus, it follows from Lemma 7.8 in \cite{huang2004nonlinear} that, for all $h \geq 1$, all $\operatorname{col}(v, w) \in \mathbb{V} \times \mathbb{W}$ and all $\operatorname{col}(X, e) \in \bar{\varOmega}_{c+\epsilon}\left(W_{X}(X)\right) \times \bar{\varOmega}_{\epsilon/2}\left(W_{e}(e)\right)$, the following inequalities are satisfied,
	\begin{align}
		\bigg|\sum_{i=1}^{N}b_{i}(w)\vartheta_{i}(\bar{u}_{i}+K\vartheta_{i})\bigg| \leq &\tfrac{\delta_{3}}{\varsigma}\|X\|^{2}+\varsigma \delta_{4}\|e\|^{2}, \label{inequality_1}\\
		\varrho\|\dot{X}\|^{2} \leq & \delta_{5}\|X\|^{2}+\delta_{6}\|e\|^{2}, \label{inequality_2}
	\end{align}
	where $ \varsigma $ and $ \delta_{l}, l=3,4,5,6 $ are some positive constants  independent of $ h $. Then by substituting (\ref{inequality_1}) and (\ref{inequality_2}) into (\ref{dot_W_X_1}) and (\ref{dot_W_e}) respectively, for all $\operatorname{col}(v, w) \in \mathbb{V} \times \mathbb{W}$ and all $\operatorname{col}(X, e) \in \bar{\varOmega}_{c+\epsilon}\left(W_{X}(X)\right) \times \bar{\varOmega}_{\epsilon/2}\left(W_{e}(e)\right)$, the derivative of $ W(X, e) $ along the trajectories of the closed-loop system consisting of (\ref{algorithm_r}), (\ref{augmented_system_1}), (\ref{bar_u_2}) and (\ref{e_{i}_1}) can be described as follows.
	\begin{align*}
		\dot{W}(X,e) \leq -\big(a-\tfrac{\delta_{3}}{\varsigma}-\tfrac{\delta_{5}}{h}\big)\|X\|^{2}- \big(\tfrac{h}{2}-\varsigma\delta_{4}-\tfrac{\delta_{6}}{h}\big)\|e\|^{2}.
	\end{align*}
	Choose $ \varsigma>\frac{\delta_{3}}{a} $ and $ h_{3}> \max\{\frac{\varsigma\delta_{5}}{a\varsigma-\delta_{3}}, 2(\varsigma\delta_{4}+\delta_{6}), 1\} $ successively. Then, for all $ h\geq \max\limits_{i=1,2,3}\{h_{i},1\} $, one can obtain that 
	\begin{align} \label{dot_Xe}
		\dot{W}(X,e)|_{(\ref{algorithm_r})+(\ref{augmented_system_1})+(\ref{bar_u_2})+(\ref{e_{i}_1})} \leq -\underline{a}\|(X,e)\|^{2},
	\end{align}
	where $ \underline{a}=\min\{a-\tfrac{\delta_{3}}{\varsigma}-\tfrac{\delta_{5}}{h},~\tfrac{h}{2}-\varsigma\delta_{4}-\tfrac{\delta_{6}}{h}\} $.
	
	Note that $\bar{\varOmega}_{c+\epsilon}(W(X, e)) \subseteq \bar{\varOmega}_{c+\epsilon}\left(W_{X}(X)\right) \times \bar{\varOmega}_{\epsilon/2}\left(W_{e}(e)\right)$. It can be concluded that, once the trajectory of $\operatorname{col}(X(t), e(t))$ enters $\bar{\varOmega}_{c+\epsilon}$ $(W(X, e))$, it remains in the compact set $\bar{\varOmega}_{c+\epsilon}\left(W_{X}(X)\right) \times \bar{\varOmega}_{\epsilon/2}\left(W_{e}(e)\right)$. In addition, it is shown in (\ref{dot_Xe}) that the trajectory of $\operatorname{col}(X(t), e(t))$ converges to the origin as time goes to infinity. Therefore, it is proved that the equilibrium point of the closed-loop system is semi-globally asymptotically stable, with its region of attraction containing $\bar{Q}_{\bar{R}}^{n_{c}}$.
\end{proof}

Based on Lemma \ref{lemma_stabilization} and Theorem \ref{theorem_2}, the main result of this work is summaried as follows.

\begin{theorem} \label{theorem 2}
	Under Assumptions \ref{assumption_cost functions}--\ref{assumption_zero dynamics_1}, given any $R>0$ and any compact set $\mathbb{V}_{0} \times \mathbb{W} \subseteq \mathbb{R}^{n_{v}+n_{w}}$, there exist sufficiently large $g$, $K$, $\delta$ and $h$ that depend on $R$ such that, for any $ x_{c}(0)\in \bar{Q}_{R}^{n_{c}} $ and any $ \operatorname{col}(v(0),w)\in\mathbb{V}_{0}\times\mathbb{W} $, the optimal output consensus Problem \ref{problem} is solved by the following distributed output feedback controller,
	\begin{align}
		u_{i} &=-\beta_{\delta}\big(K \tilde{\vartheta}_{i}\big) + \Gamma_{i} T_{i}^{-1} \eta_{i}, \label{controller_1_1}\\
		\dot{\eta}_{i} &=M_{i} \eta_{i}+N_{i} u_{i}, \quad i=1,2,\ldots,N, \label{controller_1_2}
	\end{align}
	where $\tilde{\vartheta}_{i}=\tilde{x}_{i n_{i}}+g \gamma_{i\left(n_{i}-1\right)} \tilde{x}_{i\left(n_{i}-1\right)}+\cdots+g^{n_{i}-2} \gamma_{i 2} \tilde{x}_{i 2}+g^{n_{i}-1} \gamma_{i 1}\left(\tilde{x}_{i 1}-y_{i}^{r}\right)$, with $ \tilde{x}_{i k},k=1,2,\ldots,n_{i} $ being generated by the distributed high-gain observer (\ref{high-gain_observer}) and $ y_{i}^{r} $ being generated by the distributed optimal coordinator (\ref{algorithm_r}).
\end{theorem}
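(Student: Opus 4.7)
The plan is to show Theorem~3 as an essentially immediate consequence of Lemma~1 together with Theorem~2, with the only substantive work being to (i) verify that the controller (34)--(35), written in the original coordinates, is precisely the composition of the optimal coordinator (5)--(6), the internal model (13), the high-gain observer (27), and the decentralized stabilizer (29), and (ii) track the initial-condition radius through the coordinate change $\bar{x}_{c} = \bar{x}_{c}(x_{c}, v, w)$ so that the semi-global region $\bar{Q}_{R}^{n_{c}}$ appearing in Problem~1 maps into a set contained in the semi-global region $\bar{Q}_{\bar{R}}^{n_{c}}$ that Theorem~2 handles.

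First I would unpack the controller. Recall that $\bar{u}_{i} = u_{i} - \Gamma_{i} T_{i}^{-1} \eta_{i}$ was defined precisely so that the augmented error system (14) takes the clean form in which Theorem~2 is proved. Hence the output feedback stabilizer $\bar{u}_{i} = -\beta_{\delta}(K \tilde{\vartheta}_{i})$ from (29), combined with the internal model $\dot{\eta}_{i} = M_{i} \eta_{i} + N_{i} u_{i}$ from (13), produces the control input $u_{i} = -\beta_{\delta}(K\tilde{\vartheta}_{i}) + \Gamma_{i} T_{i}^{-1} \eta_{i}$, which is exactly (34)--(35). The signal $\tilde{\vartheta}_{i}$ is built from the observer states $\tilde{x}_{ik}$ produced by (27) and the reference $y_{i}^{r}$ produced by the coordinator (5)--(6); so (34)--(35) indeed implements the full two-layer architecture described in Section~III.

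Next I would handle the radius bookkeeping. Given $R>0$ and the compact set $\mathbb{V}_{0} \times \mathbb{W}$ from Problem~1, Assumption~4 and Remark~3 furnish a compact $\mathbb{V} \supseteq \mathbb{V}_{0}$ in which $v(t)$ stays for all $t\ge 0$. Since $z_{i}^{\star}(\cdot)$, $x_{i}^{\star}(\cdot)$, $T_{i}\tau_{i}(\cdot)$, $b_{i}^{-1}(w)N_{i}\vartheta_{i}$, and the relation (12) defining $\hat{x}_{ia}$ and $\vartheta_{i}$ are all continuous on the compact set $\bar{Q}_{R}^{n_{c}} \times \mathbb{V} \times \mathbb{W}$, one can pick $\bar{R}>0$ large enough so that $x_{c}(0)\in \bar{Q}_{R}^{n_{c}}$ and $(v(0),w)\in \mathbb{V}_{0}\times\mathbb{W}$ imply $\bar{x}_{c}(0)\in \bar{Q}_{\bar{R}}^{n_{c}}$. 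Applying Theorem~2 with this $\bar{R}$ produces constants $g$, $K$, $\delta$, $h$ such that the closed loop (5)--(6), (15), (27), (29) is semi-globally stable, which gives boundedness of $\bar{x}_{c}(t)$ and $\bar{y}_{i}(t)\to 0$. Lemma~1 then yields boundedness of $x_{c}(t)$ and $\lim_{t\to\infty} y_{i}(t) = s^{\star}$, which is Problem~1.

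I do not expect a genuinely hard step here; the only place where care is needed is the radius lifting in the previous paragraph, since the change of variables uses $z_{i}^{\star}(s^{\star},v,w)$ and $T_{i}\tau_{i}(s^{\star},v,w)$, both of which depend on the (a priori unknown) optimal point $s^{\star}$ and on $(v,w)$. However, under Assumptions~1, 4 and~5 the optimal value $s^{\star}$ is fixed by the cost functions and $(v,w)$ lies in a compact set, so these quantities are uniformly bounded and the translation between the two radii is continuous. Once that is observed, Theorem~3 follows by direct citation of Lemma~1 and Theorem~2, and no further Lyapunov analysis is required.
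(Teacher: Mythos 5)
Your proposal matches the paper's own treatment: the paper gives no separate proof of this theorem, stating that it follows directly from Lemma 1 (Problem 1 is solved whenever Problem 2 is) together with Theorem 2 (Problem 2 is solvable by the coordinator, internal model, observer, and saturated stabilizer), which is exactly your route. The two points you elaborate — that (\ref{controller_1_1})--(\ref{controller_1_2}) is just $u_i=\bar u_i+\Gamma_i T_i^{-1}\eta_i$ with $\bar u_i$ from (\ref{decentralized output feedback stabilizer}), and that the radius $R$ lifts to a suitable $\bar R$ through the (compactly parameterized) coordinate change — are precisely the bookkeeping the paper folds into the construction of (\ref{internal_model}) and into the proof of Lemma \ref{lemma_stabilization}, so your argument is correct and essentially the same.
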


\begin{remark}
	Due to the utilization of the saturated output feedback controller (\ref{controller_1_1}), the selection of the initial agent conditions is restricted in a positively invariant compact subset of the attraction region. Hence, only a semi-global stability of the resulting closed-loop system can be achieved by the proposed controller. It is worth noting that the controller (\ref{controller_1_1}) is designed by saturating the linear state feedback controller (\ref{state feedback}). Compared with the nonlinear controller developed in \cite{tang2020optimal}, a linear controller has many advantages in both theoretical design and practical implementation.
\end{remark}

\section{An Example} \label{section simulation results}
In this section, an example is provided to illustrate the effectiveness of the proposed distributed output feedback controller. Consider a group of five agents with their dynamics described by the following uncertain nonlinear systems \cite{su2019semi},
\begin{align*}
	\dot{z}_{i 1} &=p_{i 1} z_{i 1}+x_{i 1} + A_{w 1}\sin(\theta t), \\
	\dot{x}_{i 1} &=x_{i 2} \\
	\dot{x}_{i 2} &=p_{i 2} z_{i 1} x_{i 1} x_{i 2}+p_{i 3} x_{i 2} + A_{w 2}\cos(\theta t) +b_{i}(w)u_{i}, \\
	y_{i} &=x_{i 1}, \quad i=1,2,
\end{align*}
and
\begin{align*}
	\dot{x}_{i 1} &=x_{i 2}, \\
	\dot{x}_{i 2} &=x_{i 3}, \\
	\dot{x}_{i 3} &=p_{i 1} x_{i 3}+p_{i 2} x_{i 2}+p_{i 3} x_{i 1}^{3}+ A_{w 3}\sin(\theta t) + b_{i}(w)u_{i}, \\
	y_{i} &=x_{i 1}, \quad i=3,4,5,
\end{align*}
where $ A_{w 1}\sin(\theta t) $, $ A_{w 2}\cos(\theta t) $ and $ A_{w 3}\sin(\theta t) $ are external disturbances, with $ A_{w k} = \mu_{i k}(w)A, ~k=1,2,3  $ being the uncertain amplitudes and $ \theta $ being the angular frequency. For $ i=1,\ldots,5 $, it is assumed that $ \mu_{i k}(w) = 1+0.1k, ~k=1,2,3 $. Note that $ \sin(\theta t)= v_{1}$ and $ \cos(\theta t)= v_{2} $, where $ v=\operatorname{col}(v_{1},v_{2}) $ is generated by the exosystem (\ref{exosystem}) with $S=[0, ~\theta;-\theta, ~0]$. Thus, Assumption \ref{assumption_exosysem} is satisfied. In addition, for $ i=1,\ldots,5 $, it is assumed that $ p_{i} = (p_{i 1}, p_{i 2}, p_{i 3}) $ satisfies $ p_{i} = \bar{p}_{i} + w_{i} $, where $ \bar{p}_{i}= (\bar{p}_{i 1}, \bar{p}_{i 2}, \bar{p}_{i 3}) $ denotes the nominal value of $ p_{i} $, $ w_{i} = (w_{i1}, w_{i2}, w_{i3}) \in \mathbb{W} $ denotes the uncertainty. Moreover, it can be verified that $ z_{i}^{\star}(s,v,w) $ in Assumption \ref{assumption_zero_dynamics} can be chosen as $ z_{i}^{\star}(s,v,w)=-\frac{p_{i1}A_{w1}}{p_{i1}^{2}+\theta^{2}}v_{1}-\frac{\theta A_{w1}}{p_{i1}^{2}+\theta^{2}}v_{2} +\frac{p_{i1}a_{i2}}{\theta}s, ~i=1,2 $. It is worth pointing out that the nonlinear functions in the agent dynamics are not globally Lipschitz.

\begin{figure}[!t] 
	\setlength{\abovecaptionskip}{-0.05cm}
	\centering 
	\begin{tikzpicture}[> = stealth, 
		shorten > = 1pt, 
		auto,
		node distance = 3cm, 
		semithick 
		,scale=0.5,auto=left,every node/.style={circle,fill=gray!20,draw=black!80,text centered}]
		\centering
		\node (n1) at (0,0)		{1};
		\node (n2) at (2,0)  	{2};
		\node (n3) at (4,0) 	{3};
		\node (n4) at (6,0) 	{4};
		\node (n5) at (8,0) 	{5};
		
		\draw[->,black!80] (n3) to [out=-135,in=-45] (n1);
		\draw[->,black!80] (n1)-- (n2);
		\draw[->,black!80] (n2)-- (n3);
		\draw[->,black!80] (n3)-- (n4);
		\draw[->,black!80] (n4)-- (n5);
		\draw[->,black!80] (n2) to [out=-35,in=-145] (n5);
		\draw[->,black!80] (n5) to [out=150,in=30] (n1);
		
	\end{tikzpicture} 
	\caption{Weight-unbalanced directed network.} 
	\label{Fig_topology1}
\end{figure}
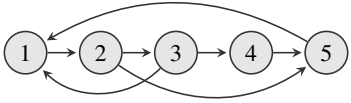

The unbalanced directed communication network among the agents is described in Fig. \ref{Fig_topology1}. It can be verified that the directed graph is strongly connected. For $ i=1,\ldots,5 $, suppose that each agent $ i $ is assigned with a local cost function $ c_{i}(s)=\frac{1}{4}(s-i+1)^{2} $. Then, all local cost functions are strongly convex, and the global minimizer is $ s^{\star}=2 $. Therefore, Assumptions \ref{assumption_cost functions} and \ref{graph assumption} are satisfied. It can be calculated that $u_{i}^{\star}\left(s^{\star}, v,w\right)=-b_{i}(w)^{-1} A_{w 2} v_{2}$ for $ i=1,2 $, and $u_{i}^{\star}\left(s^{\star}, v,w\right)=-b_{i}(w)^{-1} \big(p_{i 3}s^{\star 3} + A_{w 3} v_{1}\big)$ for $ i=3,4,5 $. Thus, Assumption \ref{assumption_u} is also satisfied. Specifically, one can deduce that  $\frac{\mathrm{d}^{2} u_{i}^{\star}\left(s^{\star}, v,w\right)}{\mathrm{d} t^{2}}=-\theta^{2} u_{i}^{\star}(s^{\star}, v,w)$ for $ i=1,2 $, and $\frac{\mathrm{d}^{3} u_{i}^{\star}\left(s^{\star}, v,w\right)}{\mathrm{d} t^{3}}=-\theta^{2} \frac{\mathrm{d} u_{i}^{\star}\left(s^{\star}, v,w\right)}{\mathrm{d} t}$ for $ i=3,4,5 $. Thus, by Theorem \ref{theorem 2}, the optimal output consensus problem for the concerned nonlinear multi-system is solvable. 

In the simulation, choose $ A=10 $ and $ \theta=0.8 $. Let $ \bar{p}_{i 1}=\bar{p}_{i 2} =\bar{p}_{i 3}= i $, and $ w_{i} = (w_{i1}, w_{i2}, w_{i3}) $ be randomly generated such that $ p_{i 1}, i=1,\ldots,5 $ are negative. Let $M_{i}=[0, 1;-2, -3], ~ N_{i}=[0; 1]$ for $i=1,2,$ and $M_{i}=[-3,-7,-5;~1, 0, 0;~0, 1, 0]$,  $N_{i}=[1; ~0; ~0]$ for $i=3,4,5$. The other initial conditions are randomly chosen. Then apply the distributed output feedback controller (\ref{controller_1_1}) with $ K=4\times10^{4} $ and $ h=100 $. The simulation result is presented in Fig. \ref{Fig_second_order_example1_1}. It can be observered from Fig. \ref{Fig_second_order_example1_1} that all the agent outputs $ y_{i}, i=1,\ldots,5 $ eventually converge to the optimal solution $ s^{\star}=2 $.

\section{Conclusion} \label{section conclusion}
In this note, the distributed optimal output consensus problem of uncertain nonlinear multi-agent systems in the normal form over unbalanced directed networks is solved by a novel distributed output feedback controller. Based on a two-layer controller structure, the concerned problem is first converted to a reference-tracking problem by developing a distributed optimal coordinator, and the obtained augmented system is then stabilized by a high-gain observer based output feedback controller. It is proved that all the agent outputs converge to the optimal solution of the sum of local cost functions semi-globally. The effectiveness of the control scheme is illustrated by a simulation example. Applying the proposed distributed controller to practical systems, such as multi-robot systems and power systems, will be considered in our future research.

\begin{figure}[t]
	\begin{center}
		\setlength{\abovecaptionskip}{-0.1cm} 
		\includegraphics[height=4.2cm, width=8cm]{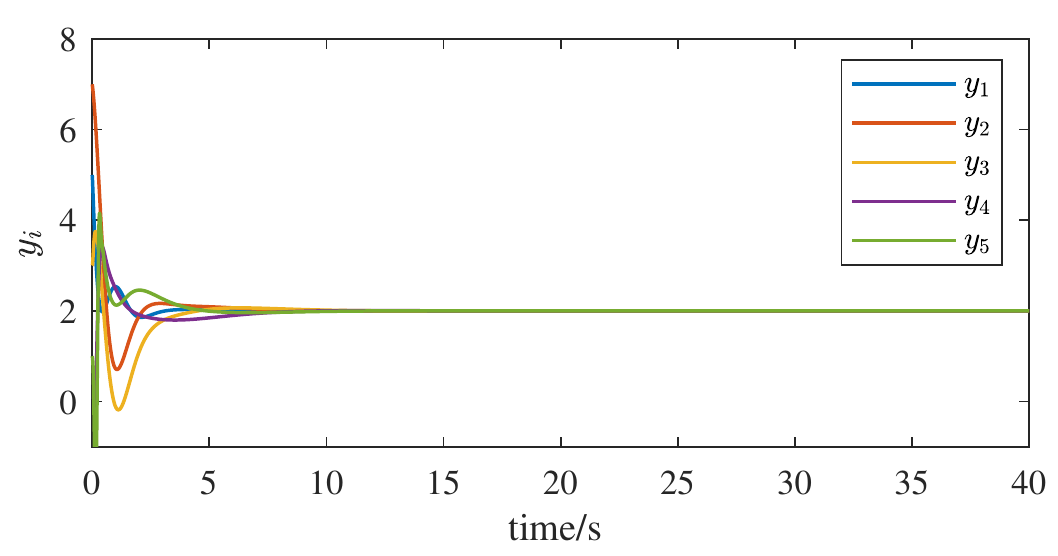} 
		\caption{Trajectories of the agent outputs $ y_{i}, i=1,\ldots,5 $.}  
		\label{Fig_second_order_example1_1}
		\vspace{-0.4cm}
	\end{center}
\end{figure}

\hypertarget{appendices}{\appendices}

	
\section{Proof of Lemma \ref{Lemma_X}} \label{Appendix_1}
Note that $A_{c i}$ and $M_{i}$ are Hurwitz. Then, there exist positive definite matrices $P_{i 1}$ and $P_{i 2}$ such that $A_{c i}^{T} P_{i 1}+P_{i 1} A_{c i} \leq-I_{n_{i}-1}$, and $M_{i}^{T} P_{i 2}+P_{i 2} M_{i} \leq-I_{s_{i}}$. It is shown in Theorem \ref{proposition1} that $ (y_{i}^{r},\zeta_{i}) $ generated by (\ref{algorithm_r}) converges to its equilibrium $ (\bar{y}_{i}^{r}, \bar{\zeta}_{i}) $ exponentially. Thus, by applying Theorem 4.14 in \cite{khalil2002nonlinear}, there exists a positive definite Lyapunov function $ \tilde{V}_{i}(\tilde{y}_{i}^{r},\tilde{\zeta}_{i}) $ such that $ \dot{\tilde{V}}_{i}|_{(\ref{algorithm_r})} \leq-\varsigma_{i 0}\big(\left\|\tilde{y}_{i}^{r}\right\|^{2}+\|\tilde{\zeta}_{i}\|^{2}\big) $ for a constant $ \varsigma_{i 0}>0 $, where $ \tilde{y}_{i}^{r}= y_{i}^{r}-\bar{y}_{i}^{r} $ and $ \tilde{\zeta}_{i}= \zeta_{i}-\bar{\zeta}_{i} $. 

Consider the following positive definite functions, $V_{i 1}\left(\hat{x}_{i a}\right)=\hat{x}_{i a}^{T} P_{i 1} \hat{x}_{i a}$, $V_{i 2}\left(\tilde{\eta}_{i}\right)=\tilde{\eta}_{i}^{T} P_{i 2} \tilde{\eta}_{i}, V_{i 3}\left(\vartheta_{i}\right)=\frac{1}{2} \vartheta_{i}^{2}$. Then for any $\bar{R}>0$, there exists $\bar{c}_{i}>0$ such that $\bar{Q}_{\bar{R}}^{n_{z_{i}}+n_{i}+s_{i}+2} \subseteq \bar{\varOmega}_{\bar{c}_{i}}\left(V_{i 0}\left(\bar{z}_{i}\right)\right) \times \bar{\varOmega}_{\bar{c}_{i}}\left(V_{i 1}\left(\hat{x}_{i a}\right)\right) \times \bar{\varOmega}_{\bar{c}_{i}}\left(V_{i 2}\left(\tilde{\eta}_{i}\right)\right) \times \bar{\varOmega}_{\bar{c}_{i}}\left(V_{i 3}\left(\vartheta_{i}\right)\right) \times \bar{\varOmega}_{\bar{c}_{i}} \big(\tilde{V}_{i}(\tilde{y}_{i}^{r},\tilde{\zeta}_{i})\big) $. Let $ c=(4+\mu) \sum_{i=1}^{\mathbb{N}} \bar{c}_{i}$, where $ \mu $ is a positive constant to be determined. Note that $\tilde{f}_{i 0}\left(\bar{z}_{i}, \hat{x}_{i a}, y_{i}^{r}, v, w\right)-\tilde{f}_{i 0}\left(\bar{z}_{i}, 0, y_{i}^{r}, v, w\right)$ is sufficiently smooth and vanishes at $\hat{x}_{i a}=0$, and $\partial V_{i 0}\left(\bar{z}_{i}\right) / \partial \bar{z}_{i}$ is continuously differentiable with $\partial V_{i 0}(0) / \partial \bar{z}_{i}=0 $.  Then, by Lemma 2 in \cite{su2014cooperative}, for all $\bar{z}_{i} \in \bar{\varOmega}_{c+\epsilon}\left(V_{i 0}\left(\bar{z}_{i}\right)\right)$, $\hat{x}_{i a} \in \bar{\varOmega}_{c+\epsilon}\left(V_{i 1}\left(\hat{x}_{i a}\right)\right)$, $  y_{i}^{r} \in \bar{\varOmega}_{c+\epsilon}\big(\tilde{V}_{i}(\tilde{y}_{i}^{r},\tilde{\zeta}_{i})\big) $, and all $\operatorname{col}(v, w) \in \mathbb{V} \times \mathbb{W}$, the following inequalities are satisfied,
\begin{align}
	&\big\|\partial V_{i 0}(\bar{z}_{i})/\partial \bar{z}_{i}\big\| \leq \varsigma_{i 1}\| \bar{z}_{i} \|, \label{A_1} \\
	&\big\|\tilde{f}_{i 0}\left(\bar{z}_{i}, \hat{x}_{i a}, y_{i}^{r}, v, w\right)-\tilde{f}_{i 0}\left(\bar{z}_{i}, 0, y_{i}^{r}, v, w\right)\big\| \leq \varsigma_{i 2}\left\|\hat{x}_{i a}\right\|, \label{A_2} \\
	&\big\|\hat{f}_{i 0}\left(y_{i}^{r}, s^{\star}, v, w\right)\big\| \leq \varsigma_{i 3}\left\|\tilde{y}_{i}^{r}\right\|, \label{A_3} 
\end{align}
for some constants $ \varsigma_{i l}\geq 1, l = 1, 2, 3 $. Therefore, the derivatives of $ V_{i k}(\cdot), k=0,1,2,3 $ along the trajectories of (\ref{augmented_system_1}) can be described as follows,
\begin{align*}
	\dot{V}_{i 0}(\bar{z}_{i})\leq &  -\tfrac{a_{i 0}}{2}\|\bar{z}_{i}\|^{2}+ \tfrac{\varsigma_{i 1}^{2}\varsigma_{i 2}^{2}}{a_{i 0}}\|\hat{x}_{i a}\|^{2} + \tfrac{\varsigma_{i 1}^{2}\varsigma_{i 3}^{2}}{a_{i 0}}\|\tilde{y}_{i}^{r}\|^{2},\\[1mm]
	\dot{V}_{i 1}(\hat{x}_{i a})\leq & -(g-2)\|\hat{x}_{i a}\|^{2}+\|P_{i 1}\|^{2}\vartheta_{i}^{2} + \|P_{i 1}\|^{2}\|\dot{y}_{i}^{r}\|^{2}, \\[1mm]
	\dot{V}_{i 2}(\tilde{\eta}_{i})\leq & -\tfrac{1}{2}\|\tilde{\eta}_{i}\|^{2}+4\|P_{i2}\tilde{f}_{i 1}\|^{2} + 4\|P_{i2}b_{i}^{-1}(w)N_{i}\|^{2}\|\varepsilon_{i}\|^{2}, \\[1mm]
	\dot{V}_{i 3}(\vartheta_{i})\leq & -(b_{i}(w)K-1)\vartheta_{i}^{2} + \tfrac{1}{2}\|\tilde{f}_{i 2}\|^{2} + \tfrac{1}{2}\|\varepsilon_{i}\|^{2}.
\end{align*}

By recalling the definitions of $\tilde{f}_{i 1}(\cdot)$ and $\tilde{f}_{i 2}(\cdot)$, it follows from Lemma 7.8 in \cite{huang2004nonlinear} that, for all $\operatorname{col}(v, w) \in \mathbb{V} \times \mathbb{W}$,
\begin{align*}
	4\|P_{i 2} \tilde{f}_{i 1}(\cdot)\|^{2} \leq & \varsigma_{i 4}(g)\big(\rho_{i 1}\left(\bar{z}_{i}\right)\left\|\bar{z}_{i}\right\|^{2}\\
	&+\rho_{i 2}\left(\hat{x}_{i a}\right)\left\|\hat{x}_{i a}\right\|^{2}+\rho_{i 3}\left(\vartheta_{i}\right) \vartheta_{i}^{2}\big),\\
	\frac{1}{2}\|\tilde{f}_{i 2}(\cdot)\|^{2} \leq & \varsigma_{i 5}(g)\big(\rho_{i 4}\left(\bar{z}_{i}\right)\left\|\bar{z}_{i}\right\|^{2}\\ 
	&+\rho_{i 5}\left(\hat{x}_{i a}\right)\left\|\hat{x}_{i a}\right\|^{2}+\rho_{i 6}\left(\vartheta_{i}\right) \vartheta_{i}^{2}\big)+ \varsigma_{i 6}\left\|\tilde{\eta}_{i}\right\|^{2},
\end{align*}
for some smooth functions $\varsigma_{i l}(\cdot) \geq 1, l=4,5$, $\rho_{i k}(\cdot) \geq 1, k=1,2, \ldots, 6$, and a positive constant $\varsigma_{i 6} \geq 1$. Thus, for all $\bar{z}_{i} \in \bar{\varOmega}_{c+\epsilon}\left(V_{i 0}\left(\bar{z}_{i}\right)\right)$, $\hat{x}_{i a} \in \bar{\varOmega}_{c+\epsilon}\left(V_{i 1}\left(\hat{x}_{i a}\right)\right)$, and all $\operatorname{col}(v, w) \in \mathbb{V} \times \mathbb{W}$, one has 
\begin{align} \label{A_4}
	\max \big\{\rho_{i 1}\left(\bar{z}_{i}\right)\!+\!\tfrac{\rho_{i 4}\left(\bar{z}_{i}\right)}{4 \varsigma_{i 6}},~ \rho_{i 2}\left(\hat{x}_{i a}\right)\!+\!\tfrac{\rho_{i 5}\left(\hat{x}_{i a}\right)}{4 \varsigma_{i 6}}\big\} \leq \tfrac{a_{i 0} \varsigma_{i 7}}{4},
\end{align}
where $\varsigma_{i 7} \geq 1$ is a constant.
Furthermore, by letting $\varsigma_{i 8}(g)=\max \{\varsigma_{i 4}(g), \varsigma_{i 5}(g)\}$, we have
\begin{align} \label{A_5}
	\left\|P_{i 1}\right\|^{2} + \tfrac{\rho_{i 3}(\vartheta_{i})}{\varsigma_{i 7}} +\tfrac{\rho_{i 6}(\vartheta_{i})}{4 \varsigma_{i 6}\varsigma_{i 7}} \leq \varsigma_{i 9},
\end{align}
for all $\vartheta_{i} \in \bar{\varOmega}_{4(c+\epsilon) \varsigma_{i 6} \varsigma_{i 7} \varsigma_{i 8}(g)}\left(V_{i 3}(\vartheta_{i})\right)$ and all $\operatorname{col}(v, w) \in \mathbb{V} \times \mathbb{W}$, where $\varsigma_{i 9} \geq 1$ is a constant.

Consider the Lyapunov function candidate $ W_{X}(X)=\sum_{i=1}^{N} W_{i}\big(X_{i}\big) $, where $ W_{i}\big(X_{i}\big)= V_{i 0}(\bar{z}_{i})+V_{i 1}(\hat{x}_{i a})+V_{i 2}(\tilde{\eta}_{i}) /(\varsigma_{i 7} \varsigma_{i 8}(g)) + V_{i 3}(\vartheta_{i})/(4 \varsigma_{i 6} \varsigma_{i 7} \varsigma_{i 8}(g)) + \mu\tilde{V}_{i}(\tilde{y}_{i}^{r},\tilde{\zeta}_{i}) $. Since $ \varsigma_{i 7} \varsigma_{i 8}(g)\geq 1 $ and $ 4 \varsigma_{i 6} \varsigma_{i 7} \varsigma_{i 8}(g)\geq 1 $, it follows from $X_{i} \in \bar{\varOmega}_{\bar{c}_{i}}\left(V_{i 0}(\bar{z}_{i})\right) \times \bar{\varOmega}_{\bar{c}_{i}}\left(V_{i 1}(\hat{x}_{i a})\right) \times \bar{\varOmega}_{\bar{c}_{i}}\left(V_{i 2}(\tilde{\eta}_{i})\right) \times \bar{\varOmega}_{\bar{c}_{i}}\left(V_{i 3}(\vartheta_{i})\right) \times \bar{\varOmega}_{\bar{c}_{i}}\big(\tilde{V}_{i}(\tilde{y}_{i}^{r}, \tilde{\zeta}_{i})\big)$ that $X \in \bar{\varOmega}_{c}\big(W_{X}(X)\big)$, i.e., equation (\ref{varOmega_c_X}) is satisfied.

In addition, it is noted that $X \in \bar{\varOmega}_{c+\epsilon}\left(W_{X}(X)\right)$ implies $\bar{z}_{i} \in \bar{\varOmega}_{c+\epsilon}\big(V_{i 0}(\bar{z}_{i})\big)$, $\hat{x}_{i a} \in \bar{\varOmega}_{c+\epsilon}\big(V_{i 1}(\hat{x}_{i a})\big)$, $ \tilde{y}_{i}^{r} \in \bar{\varOmega}_{c+\epsilon}\big(\tilde{V}_{i}(\tilde{y}_{i}^{r}, \tilde{\zeta}_{i})\big) $ and $\vartheta_{i} \in \bar{\varOmega}_{4(c+\epsilon) \varsigma_{i 6} \varsigma_{i 7} \varsigma_{i 8}(g)}\big(V_{i 3}(\vartheta_{i})\big)$. Therefore, inequalities $(\ref{A_1})$--$(\ref{A_5})$ are still satisfied for all $X\in \bar{\varOmega}_{c+\epsilon}\big(W_{X}(X)\big)$. Moreover, it can be obtained that
\begin{align*}
	\Big(\tfrac{4\|b_{i}^{-1}(w)P_{i 2}N_{i}\|^{2}}{\varsigma_{i 7}\varsigma_{i 8}(g)} 
	+& \tfrac{1}{8\varsigma_{i 6} \varsigma_{i 7} \varsigma_{i 8}(g)}\Big)\|\varepsilon_{i}\|^{2} + \tfrac{\varsigma_{i 1}^{2}\varsigma_{i 2}^{2}}{a_{i 0}}\|\tilde{y}_{i}^{r}\|^{2} \notag\\
	+& \|P_{i 1}\|^{2}\|\dot{y}_{i}^{r}\|^{2} \leq \varsigma_{i 10}\|\tilde{y}_{i}^{r}\|^{2} + \varsigma_{i 11}\|\dot{y}_{i}^{r}\|^{2},
\end{align*}
for some constants $\varsigma_{i l} \geq 1, l=10, 11$.
Thus, for all $X \in \bar{\varOmega}_{c+\epsilon}\left(W_{X}(X)\right)$ and all $\operatorname{col}(v, w) \in \mathbb{V} \times \mathbb{W}$, the derivative of $W_{i}(\cdot)$ along the trajectories of the closed-loop system composed of (\ref{algorithm_r}), (\ref{augmented_system_1}) and (\ref{state feedback}) can be described as follows,
\begin{align*}
	\dot{W}_{i}\big(\bar{z}_{i}, \hat{x}_{i a}, &\tilde{\eta}_{i}, \vartheta_{i}, \tilde{y}_{i}^{r}, \tilde{\zeta}_{i}\big)\big|_{(\ref{algorithm_r})+(\ref{augmented_system_1})+(\ref{state feedback})} \notag\\
	\leq &-\tfrac{a_{i 0}}{4}\left\|\bar{z}_{i}\right\|^{2}-\Big(g-2-\tfrac{\varsigma_{i 1}^{2} \varsigma_{i 2}^{2}}{a_{i 0}}-\tfrac{a_{i 0}}{4}\Big)\|\hat{x}_{i a}\|^{2} \notag\\
	&-\tfrac{1}{4 \varsigma_{i 7} \varsigma_{i 8}(g)}\left\|\tilde{\eta}_{i}\right\|^{2} -\Big(\tfrac{b_{i}(w)K - 1}{4 \varsigma_{i 6} \varsigma_{i 7} \varsigma_{i 8}(g)}- \varsigma_{i 9}\Big) \vartheta_{i}^{2} \notag\\[1mm]
	&+\varsigma_{i 10}\|\tilde{y}_{i}^{r}\|^{2} + \varsigma_{i 11}\|\dot{y}_{i}^{r}\|^{2} - \mu \varsigma_{i 0}\big(\left\|\tilde{y}_{i}^{r}\right\|^{2}+\|\tilde{\zeta}_{i}\|^{2}\big).
\end{align*}
Under Assumption \ref{assumption_cost functions}, it can be proved that $ \dot{y}_{i}^{r} $ is Lipschitz in $ \operatorname{col}(\tilde{y}_{i}^{r}, \tilde{z}_{i}) $ by using similar arguments as those in \cite{tang2020optimal}. One thus has $ \varsigma_{i 11}\|\dot{y}_{i}^{r}\|^{2} \leq \mu_{1}\big(\left\|\tilde{y}_{i}^{r}\right\|^{2}+\|\tilde{\zeta}_{i}\|^{2}\big) $,
where $ \mu_{1}>0 $ is a constant.
Then, by successively choosing $g>\max\limits_{i=1,2, \ldots, N}\big\{2+\tfrac{\varsigma_{i 1}^{2} \varsigma_{i 2}^{2}}{a_{i 0}}+\tfrac{a_{i 0}}{4}\big\}$, $ K> \max\limits_{i=1,2, \ldots, N}\big\{\frac{1}{\min_{w \in \mathbb{W}}\{b_{i}(w)\}}(1+4 \varsigma_{i 6} \varsigma_{i 7} \varsigma_{i 8}(g) \varsigma_{i 9})\big\}$, $ \mu>\max\limits_{i=1,2, \ldots, N}\big\{\frac{\varsigma_{i 10}+\mu_{1}}{\varsigma_{i 0}}\big\} $, and $ a=\min\limits_{i=1,2, \ldots, N}\big\{\tfrac{a_{i 0}}{4},~ g-2-\tfrac{\varsigma_{i 1}^{2} \varsigma_{i 2}^{2}}{a_{i 0}}-\tfrac{a_{i 0}}{4},~ \tfrac{1}{4 \varsigma_{i 7} \varsigma_{i 8}(g)},~ \tfrac{b_{i}(w)K - 1}{4 \varsigma_{i 6} \varsigma_{i 7} \varsigma_{i 8}(g)}- \varsigma_{i 9},~ \mu \varsigma_{i 0}- \mu_{1}- \varsigma_{i 10}\big\} $, one can obtain that $\dot{W}_{i}\left(X_{i}\right)\big|_{(\ref{algorithm_r})+(\ref{augmented_system_1})+(\ref{state feedback})} \leq -a\left\|X_{i}\right\|^{2}$. This indicates that inequality (\ref{dot_W_X}) is satisfied, and the proof is thus completed. \qed

\section{Proof of Lemma \ref{Lemma_e}} \label{Appendix_2}
Under appropriate choices of $ c_{i k}, k=1,2,\ldots,n_{i} $, matrix $ E_{i 3} $ defined in (\ref{E_i3}) is Hurwitz. Thus, there exists a positive definite matrix $P_{i 3}$ such that $E_{i 3}^{T} P_{i 3}+P_{i 3} E_{i 3} \leq -I_{n_{i}}$. Consider the positive definite function $ W_{e}(e)=\sum_{i=1}^{N}e_{i}^{\operatorname{T}}P_{i 3}e_{i} $. Then, for all $\operatorname{col}(v, w) \in$ $\mathbb{V} \times \mathbb{W}$, its derivative along the trajectories of (\ref{e_{i}_1}) can be deduced as follows,
\begin{equation*}
	\dot{W}_{e}(e)\big|_{(\ref{e_{i}_1})} \leq -\frac{h}{2}\sum_{i=1}^{N}\|e_{i}\|^{2} + \frac{2}{h} \sum_{i=1}^{N} \|P_{i 3}E_{i 4}\|^{2}\|\dot{X}_{i}\|^{2}.
\end{equation*}
Let $ \varrho= \max\limits_{i=1,2, \ldots, N}\big\{2\|P_{i 3}E_{i 4}\|^{2}\big\} $. Then the inequality (\ref{dot_W_e}) is satisfied, and the proof is thus completed. \qed


{
\tiny
\bibliographystyle{IEEEtran}  
\bibliography{IEEEabrv,mylib}  

\begin{thebibliography}{10}
\providecommand{\url}[1]{#1}
\csname url@samestyle\endcsname
\providecommand{\newblock}{\relax}
\providecommand{\bibinfo}[2]{#2}
\providecommand{\BIBentrySTDinterwordspacing}{\spaceskip=0pt\relax}
\providecommand{\BIBentryALTinterwordstretchfactor}{4}
\providecommand{\BIBentryALTinterwordspacing}{\spaceskip=\fontdimen2\font plus
\BIBentryALTinterwordstretchfactor\fontdimen3\font minus
  \fontdimen4\font\relax}
\providecommand{\BIBforeignlanguage}[2]{{%
\expandafter\ifx\csname l@#1\endcsname\relax
\typeout{** WARNING: IEEEtran.bst: No hyphenation pattern has been}%
\typeout{** loaded for the language `#1'. Using the pattern for}%
\typeout{** the default language instead.}%
\else
\language=\csname l@#1\endcsname
\fi
#2}}
\providecommand{\BIBdecl}{\relax}
\BIBdecl

\bibitem{Yang2019survey}
T.~Yang, X.~Yi, J.~Wu, Y.~Yuan, D.~Wu, Z.~Meng, Y.~Hong, H.~Wang, Z.~Lin, and
  K.~H. Johansson, ``{A survey of distributed optimization},'' \emph{Annual
  Reviews in Control}, vol.~47, pp. 278--305, 2019.

\bibitem{Molzahn2017survey}
D.~K. Molzahn, F.~D{\"{o}}rfler, H.~Sandberg, S.~H. Low, S.~Chakrabarti,
  R.~Baldick, and J.~Lavaei, ``{A survey of distributed optimization and
  control algorithms for electric power systems},'' \emph{IEEE Transactions on
  Smart Grid}, vol.~8, no.~6, pp. 2941--2962, 2017.

\bibitem{Nedic2009distributed}
A.~Nedi{\'{c}} and A.~Ozdaglar, ``{Distributed subgradient methods for
  multi-agent optimization},'' \emph{IEEE Transactions on Automatic Control},
  vol.~54, no.~1, pp. 48--61, 2009.

\bibitem{Nedic2014distributed}
A.~Nedi{\'c} and A.~Olshevsky, ``Distributed optimization over time-varying
  directed graphs,'' \emph{IEEE Transactions on Automatic Control}, vol.~60,
  no.~3, pp. 601--615, 2014.

\bibitem{Xi2017dextra}
C.~Xi and U.~A. Khan, ``{DEXTRA: A fast algorithm for optimization over
  directed graphs},'' \emph{IEEE Transactions on Automatic Control}, vol.~62,
  no.~10, pp. 4980--4993, 2017.

\bibitem{Gharesifard2013distributed}
B.~Gharesifard and J.~Cortes, ``{Distributed continuous-time convex
  optimization on weight-balanced digraphs},'' \emph{IEEE Transactions on
  Automatic Control}, vol.~59, no.~3, pp. 781--786, 2013.

\bibitem{Kia2015distributed}
S.~S. Kia, J.~Cort{\'{e}}s, and S.~Mart{\'{i}}nez, ``{Distributed convex
  optimization via continuous-time coordination algorithms with discrete-time
  communication},'' \emph{Automatica}, vol.~55, pp. 254--264, 2015.

\bibitem{stegink2016Unifying}
T.~Stegink, C.~De~Persis, and A.~van~der Schaft, ``A unifying energy-based
  approach to stability of power grids with market dynamics,'' \emph{IEEE
  Transactions on Automatic Control}, vol.~62, no.~6, pp. 2612--2622, 2016.

\bibitem{zhang2011extremum}
C.~Zhang and R.~Ord{\'o}{\~n}ez, \emph{Extremum-Seeking Control and
  Applications: A Numerical Optimization-Based Approach}.\hskip 1em plus 0.5em
  minus 0.4em\relax London: Springer-Verlag, 2011.

\bibitem{zhang2017distributed}
Y.~Zhang, Z.~Deng, and Y.~Hong, ``Distributed optimal coordination for multiple
  heterogeneous {E}uler--{L}agrangian systems,'' \emph{Automatica}, vol.~79,
  pp. 207--213, 2017.

\bibitem{li2019distributed}
Z.~Li, Z.~Wu, Z.~Li, and Z.~Ding, ``{Distributed optimal coordination for
  heterogeneous linear multi-agent systems with event-triggered mechanisms},''
  \emph{IEEE Transactions on Automatic Control}, vol.~65, no.~4, pp.
  1763--1770, 2019.

\bibitem{zhang2020exponential}
J.~Zhang, L.~Liu, and H.~Ji, ``Exponential convergence of distributed optimal
  coordination for linear multi-agent systems over general digraphs,'' in
  \emph{Proc. 39th Chinese Control Conference (CCC)}, 2020, pp. 5047--5051.

\bibitem{Wang2015distributed}
X.~Wang, Y.~Hong, and H.~Ji, ``Distributed optimization for a class of
  nonlinear multiagent systems with disturbance rejection,'' \emph{IEEE
  Transactions on Cybernetics}, vol.~46, no.~7, pp. 1655--1666, 2015.

\bibitem{Tang2018distributed}
Y.~Tang, ``Distributed optimization for a class of high-order nonlinear
  multiagent systems with unknown dynamics,'' \emph{International Journal of
  Robust and Nonlinear Control}, vol.~28, no.~17, pp. 5545--5556, 2018.

\bibitem{zhang2021optimal}
J.~Zhang, L.~Liu, and H.~Ji, ``Optimal output consensus of second-order
  uncertain nonlinear systems on weight-unbalanced directed networks,''
  \emph{arXiv preprint,
  \href{https://arxiv.org/abs/2107.04056}{arXiv:2107.04056}}, 2021.

\bibitem{li2020distributed}
R.~Li and G.-H. Yang, ``Distributed optimization for a class of uncertain
  {MIMO} nonlinear multi-agent systems with arbitrary relative degree,''
  \emph{Information Sciences}, vol. 506, pp. 58--77, 2020.

\bibitem{tang2020optimal}
Y.~Tang and X.~Wang, ``Optimal output consensus for nonlinear multi-agent
  systems with both static and dynamic uncertainties,'' \emph{IEEE Transactions
  on Automatic Control}, vol.~66, no.~4, pp. 1733--1740, 2020.

\bibitem{mei2015distributed}
J.~Mei, W.~Ren, and J.~Chen, ``Distributed consensus of second-order
  multi-agent systems with heterogeneous unknown inertias and control gains
  under a directed graph,'' \emph{IEEE Transactions on Automatic Control},
  vol.~61, no.~8, pp. 2019--2034, 2015.

\bibitem{Li2017distributed}
Z.~Li, Z.~Ding, J.~Sun, and Z.~Li, ``{Distributed adaptive convex optimization
  on directed graphs via continuous-time algorithms},'' \emph{IEEE Transactions
  on Automatic Control}, vol.~63, no.~5, pp. 1434--1441, 2017.

\bibitem{Zhu2018continuous}
Y.~Zhu, W.~Yu, G.~Wen, and W.~Ren, ``{Continuous-time coordination algorithm
  for distributed convex optimization over weight-unbalanced directed
  networks},'' \emph{IEEE Transactions on Circuits and Systems II: Express
  Briefs}, vol.~66, no.~7, pp. 1202--1206, 2018.

\bibitem{Xie2020suboptimal}
Y.~Xie and Z.~Lin, ``Suboptimal output consensus for a group of weakly
  nonminimum phase linear systems,'' \emph{Automatica}, vol. 119, p. 109084,
  2020.

\bibitem{khalil2002nonlinear}
H.~K. Khalil, \emph{Nonlinear Systems}, 3rd~ed.\hskip 1em plus 0.5em minus
  0.4em\relax Upper Saddle River, NJ, USA: Prentice Hall, 2002.

\bibitem{Bullo2019lectures}
F.~Bullo, \emph{Lectures on Network Systems}.\hskip 1em plus 0.5em minus
  0.4em\relax Santa Barbara, USA: Kindle Direct Publishing, 2019.

\bibitem{Bertsekas2009convex}
D.~P. Bertsekas, \emph{Convex Optimization Theory}.\hskip 1em plus 0.5em minus
  0.4em\relax Nashua NH, USA: Athena Scientific, 2009.

\bibitem{su2014cooperative}
Y.~Su and J.~Huang, ``Cooperative semi-global robust output regulation for a
  class of nonlinear uncertain multi-agent systems,'' \emph{Automatica},
  vol.~50, no.~4, pp. 1053--1065, 2014.

\bibitem{su2019semi}
Y.~Su, ``Semi-global output feedback cooperative control for nonlinear
  multi-agent systems via internal model approach,'' \emph{Automatica}, vol.
  103, pp. 200--207, 2019.

\bibitem{huang2004nonlinear}
J.~Huang, \emph{Nonlinear Output Regulation: Theory and Applications}.\hskip
  1em plus 0.5em minus 0.4em\relax Philadelphia, PA, USA: SIAM, 2004.

\bibitem{isidori1999nonlinear}
A.~Isidori, \emph{Nonlinear Control Systems II}.\hskip 1em plus 0.5em minus
  0.4em\relax London, U.K.: Springer Verlag, 1999.

\end{thebibliography}
}

\end{document}